\theoremstyle{definition}
\newtheorem{theorem}{Theorem}
\newtheorem{proposition}{Proposition}
\newtheorem{lemma}{Lemma}
\newtheorem{example}{Example}
\newtheorem{remark}{Remark}
\newtheorem{definition}{Definition}
\def\obs{\text{obs}}
\def\Ytrue{\mathbf{Y}^{\textrm{true}}}
\def\Yimptheta{\mathbf{Y}^{\textrm{imp}}_{\theta}}
\def\Yimpthetazero{\mathbf{Y}^{\textrm{imp}}_{\theta_0}}
\def\Yimpthetai{\mathbf{Y}^{\textrm{imp}}_{\theta,i}}
\def\Yobs{\mathbf{Y}^{\textrm{obs}}}
\def\Tobs{T^{\textrm{obs}}}
\def\Wobs{\mathbf{W}^{\textrm{obs}}}
\def\Dobs{\mathbf{D}^{\textrm{obs}}}
\def\Wrep{\mathbf{W}^{\textrm{rep}}}
\def\Yrep{\mathbf{Y}^{\textrm{rep}}}
\def\Drep{\mathbf{D}^{\textrm{rep}}}
\def\Trep{T^{\textrm{rep}}}
\def\ybarone{\overline{Y}^{\textrm{obs}}(1)}
\def\ybarzero{\overline{Y}^{\textrm{obs}}(0)}
\def\Wrepprime{\mathbf{W}^{\textrm{rep} \prime}}
\title{Leveraging the Fisher randomization test using confidence distributions: inference, combination and fusion learning}
\date{}
\author{Xiaokang Luo, Tirthankar Dasgupta, Minge Xie and Regina Liu}
\affil{Department of Statistics, Rutgers University \\ 110 Frelinghuysen Rd, Piscataway, NJ 08854}
\begin{document}
	\maketitle
	
	\begin{abstract}
The flexibility and wide applicability of the Fisher randomization test (FRT) makes it an attractive tool for assessment of causal effects of interventions from modern-day randomized experiments that are increasing in size and complexity. This paper provides a theoretical inferential framework for FRT by establishing its connection with confidence distributions Such a connection leads to development of (i) an unambiguous procedure for inversion of FRTs to generate confidence intervals with guaranteed coverage, (ii) generic and specific methods to combine FRTs from multiple independent experiments with theoretical guarantees and (iii) new insights on the effect of size of the Monte Carlo sample on the results of FRT. Our developments pertain to finite sample settings but have direct extensions to large samples. Simulations and a case example demonstrate the benefit of these new developments.
\end{abstract}

	\section{Introduction}
	
	Fisher randomization tests (FRT) are flexible tools because they are model free, permit assessment of causal effects of interventions on \textit{any} type of response for \textit{any} assignment mechanism using \textit{any} test statistic, and can be easily extended to model-based inference \citep{Rubin:1980, Rubin:1984}. The tremendous development of computing resources has recently sparked a lot of interest in using FRT to test complex causal hypotheses that can arise from modern-day randomized experiments \citep[e.g.,][]{Hennessy:2016, Athey:2017, Basse_Feller:2018, Basse:2019} in the social, biomedical, educational, behavioral sciences. The work by \citep{Morgan:2012} have shown how randomization tests can be applied to design and analyze randomzied experiments with several pre-treatment covariates. As modern experiments continue to grow in \emph{size} (in terms of number of experimental units, interventions, covariates and as combinations of several independent sub-experiments) and \emph{complexity} (e.g., non-standard randomized assignment mechanisms), the flexibility and wide applicability of FRT make it a promising tool to analyze such experiments.  

However, there are three aspects of FRT that can arguably be made more transparent to make it more appealing to scientists. The first among these is related to the theoretical and implementation aspects of inverting FRTs to generate interval estimators of treatment effects -- interval estimates are typically more appealing than a $p$-value or an acceptance-rejection decision. %Whereas FRTs are primarily used to test a sharp null hypothesis of no (or a constant) treatment effect on any experimental unit, they can also be ``inverted'' to generate intervals for treatment effects. 
This inversion is done by testing a sequence of sharp null hypotheses of constant treatment effects, and using the curve of the resulting $p$-values. The first original reference of a similar inversion procedure appears in \cite{Pitman:1937}. Whereas proposed procedures and algorithms appear to work well in large sample settings \citep{Garthwaite:1996, Ding:2017}, it is somewhat surprising that the theoretical properties of this inversion procedure, especially in a finite population setting have been scantily discussed in causal inference literature and apparently counter-intuitive simulation results have sometimes been difficult to explain. See, for example, discussion in Sec 7.3 of \cite{Ding:2017} on the intervals for factorial effects obtained in \cite{Dasgupta:2015}. As we shall see in this paper, the discrete nature of the $p$-value statistic poses complexities associated with the inversion procedure in a finite population setting. 

The second, and related aspect is performing meta analysis using FRT. This entails combining results from independently conducted randomized experiments, possibly with different assignment mechanisms, to draw sharper inference on a common treatment effect. Whereas there exist several methods in literature to combine $p$-values from independent tests of hypotheses, obtaining a composite interval with the desired coverage is not straightforward, especially in the finite population case when the $p$-value function is discrete.

The third aspect is computational. The FRT is a computation-intensive procedure, as its classical form involves generating all possible permutations of the observed assignment vector that are consistent with the assignment mechanism. The total number of such permutations in a balanced completely randomized design increases from 252 to $10^{29}$ as the number of units increases from 10 to 100. Procedures like rerandomization \citep{Morgan:2012} involve repeated application of such computations and can be quite intimidating to practitioners.  Researchers studying empirical properties of FRTs using simulations in large $N$ settings are also challenged by such computational requirements. A common way to get around this issue is to generate a sample of all possible permutations, say 1000 or 5000, and use it to obtain a Monte-Carlo estimate of the $p$-value. However, to the best of our knowledge, there does not exist any insights or theoretical results about how large a sample size will guarantee acceptable inferential properties.

In this paper, we attempt to address the three issues mentioned above by providing a new theoretical perspective of FRT using the concept of confidence distributions (CD), which will be formally introduced in Section \ref{ss:CD}. By establishing the $p$-value function of the FRT as an ``approximate'' (defined precisely later) CD function, the paper makes the following contributions: (i) It provides insights into the theoretical properties of the intervals generated by inverting FRTs in finite population settings. In fact it is shown that without precise definitions of $p$-value functions and a carefully designed procedure, inverting the FRT does not necessarily generate an interval with the intended coverage. More surprisingly, it is argued that contrary to the common belief, inversion of FRT using any arbitrary test statistic does not necessarily guarantee an interval for the underlying treatment effect. (ii) By borrowing results from the CD literature, it establishes computationally efficient algorithms for inversion of FRT that generate intervals with the desired coverage. (iii) Establishes procedures for combining inferences from similar and dissimilar experiments using methods for combining CDs (that includes classical methods of combining $p$-values as a special case). (iv) It provides theoretical insights into the effect of the Monte Carlo sample size on the performance of FRT in a finite population setting.

%As we shall see in Section \ref{sec:fundamentals}, the procedure of inverting the sharp null described in the previous paragraph can be implemented through different algorithms depending on the choice of the direction of the alternative hypothesis, and such algorithms can yield different intervals, apparently even with the same test statistic. 

%In a recent article, \cite{Caughey:2017} showed that such interval estimators can be interpreted more meaningfully under a bounded null hypotheses if a certain class of test statistics are used. However there still seems to be some doubt regarding whether the ``Fisherian intervals'' can be called ``confidence intervals'' from a frequentist standpoint in view of their fiducial character.

In the following section, we separately introduce the basic notions and concepts of FRT and CD in two subsections. Section 3 creates the bridge between FRT and CD. Section 4 deals with the first objective - developing an unambiguous procedure for inversion of FRTs to generate confidence intervals wth guaranteed coverage. Section 5 provides methods to combine FRTs from independent experiments. Section 6 investigates the effect of size of the Monte Carlo sample on the results of FRT. Section 7 presents a case study and Section 8 contains some concluding remarks.

	\section{Fundamentals} \label{sec:fundamentals}
	
	\subsection{The FRT understood through the potential outcomes model} \label{ss:POT}
	
	Consider a finite population of $N$ experimental units, each of which can be exposed to either a treatment (denoted by 1) or a control (denoted by 0). For unit $i$, let $Y_i(1)$ and $Y_i(0)$ respectively denote the potential outcomes \citep{Neyman:1923, Rubin:1974} under treatment and control. We define the unit-level  causal effect of the treatment on unit $i$ as $\theta_i = Y_i(1) - Y_i(0)$, and the finite-population level average causal effect 
	$$\theta = N^{-1} \sum_{i=1}^N \theta_i = N^{-1} \sum_{i=1}^N Y_i(1) - N^{-1} \sum_{i=1}^N Y_i(0). $$
	
In a randomized design, the $N$ units are assigned to the two treatment groups using a known randomized assignment mechanism. Let ${\mathbf W} = (W_1, \ldots, W_N)^\top$ denote a binary random vector whose $i$th element $W_i$ equals one or zero according as unit $i$ is assigned to treatment or control. The assignment mechanism is defined as the probability distribution of the random vector ${\mathbf W}$ and dictates all inference statements.  In a completely randomized design with $N_1$ and $N_0$ units assigned to treatment and control respectively, where $N_1$ and $N_0$ are predetermined, the assignment mechanism is:
$$	P(W_1 = w_1, \ldots, W_N = w_N) = \left(\frac{N!}{N_0! N_1!}\right)^{-1}  I_{\{\sum_{i=1}^N w_i = N_1\}}. $$	The observed outcome for the $i$th unit is denoted by $ Y_i^\obs = W_i Y_i(1) + (1- W_i) Y_i(0), \ i=1, \ldots, N.$ Thus, only one of the two potential outcomes for each unit is observed and the other is missing.
	
	Consider testing the sharp null hypothesis 
\begin{equation}
H_0^{\theta}: Y_i(1) - Y_i(0) = \theta, \ \mbox{for all} \ i = 1, \ldots, N, \label{eq:H0}
\end{equation} 
that is, all units have an identical treatment effect $\theta$. A special case of this hypothesis is $H_0^0: \theta=0$, Fisher's sharp null hypothesis of no treatment effect on any unit \citep{Fisher:1935, Rubin:1980}. The hypothesis $H_0^{\theta}$ can be tested by considering a suitable test statistic $T$, and comparing its observed value $\Tobs$ with the randomization distribution of $T$ under the null hypothesis. This randomization distribution is generated by imputing the missing outcomes under $H_0^{\theta}$ and repeatedly generating values of $T$ by drawing from the known probability distribution of the assignment vector ${\mathbf W}$. The $p$-value is the tail probability measuring the extremeness of the test statistic with respect to its randomization distribution. Rejection of $H_0^{\theta}$ if the $p$ value is less than or equal to $\alpha \in (0,1)$ leads to a test procedure with level $\alpha$, i.e., the probability of Type-I error not exceeding $\alpha$. The beauty of this procedure is, it can be tested with any reasonable test statistic that is capable of summarizing the difference between the treatment and control groups. 

By varying $\theta$ and testing a set of sharp null hypotheses $H_0^{\theta}$, it is possible to obtain a ``$p$-value function'' of $\theta$, which is a step-value function. This step function can be inverted to generate an interval estimator for the true additive effect $\theta$. As we shall see in Section \ref{sec:connection}, most of the subsequent developments will be based on this $p$-value function and its variants. A toy example presented in the supplementary material demonstrates each step involved in conducting a randomization test, generating a $p$-value function and inverting it to obtain an interval for $\theta$.

\subsection{A Brief Overview of Confidence Distributions and Confidence Curves} \label{ss:CD}
	
The idea of a {\it confidence distribution} (CD) is to use {\it a sample-dependent distribution function} defined on the parameter space to estimate a fixed but unknown (scalar/vector) parameter \citep{cox1958, Efron1993, Efron1998, Xie2013a, Schweder2016}. Such a practice elevates a point (point estimator using the single value of a sample statistic) and two points (confidence interval using a lower and an upper limit) to a full function that can be used to draw inference on the parameter of interest. Similar to a Bayesian posterior, a CD contains much more inferential information than the classical point and interval estimators. 
	
For ease of illustration, consider the simple case of a scalar parameter $\theta \in \Theta$ with sample data ${\mathbf Y}_n  = (Y_1, \ldots, Y_n) \in {\cal Y}$.  A function $H_n(\cdot)\equiv H(\cdot, {\mathbf Y}_n)$ on $\Theta \times {\cal Y}$ is called a {\it confidence distribuition (CD) function} for $\theta$, if (i) given ${\mathbf Y}_n$, $H_n(\cdot)$ is a cumulative distribution function on $\Theta$; and (ii) at the true parameter value $\theta=\theta_0$, $H_n(\theta_0) = H(\theta_0, {\mathbf Y}_n)$, as a function of the sample ${\mathbf Y}_n$, follows a Uniform[0,1] distribution \citep{Schweder2002, Singh2005}.  In other words, (i) requires that a CD is a sample-dependent distribution function on $\Theta$. Requirement (ii) ensures that the CD function can be used to obtain confidence intervals and test hypotheses. For example, by (ii), $(-\infty, H_n^{-1}(\alpha))$ is a $100 (1-\alpha)\%$ confidence interval for $\theta$, and $H_n(b)$ provides a $p$-value function for testing the hypothesis $\Omega_0: \theta \leq b$ versus $\Omega_1: \theta > b$. This shows that a one-sided $p$-value function is a special case of a CD.  Corresponding to a CD function $H_n(\theta)$, one can obtain a \emph{confidence curve} (CV) 
$$ CV(\theta) = 2 \min\{H_n(\theta), 1-H_n(\theta)\},$$ 
which can also be used to draw similar inferences \citep{Birnbaum:1961}.

%The CD is a particularly efficient tool for combining inferences~from~different~sources. The key idea of fusion Learning using CD is to summarize relevant data information for each individual using a CD, then synthesize information via combination of the CDs. Since the theoretical framework of combining CDs have been developed, the Fusion learning based on CD is flexible, effective, and computationally feasible for big data with its divide and conquer approach.

Due to the discrete nature of the FRT in which the $p$-value is a step function as in Figure  \ref{fig:Trep_hist}(supplementary material), the following definition will be useful for this paper:
\begin{definition}[\textbf{Upper and Lower CDs}] A function $H_n^{\text{L}}(\cdot) = H^{\text{L}}(\cdot, {\mathbf Y}_n)$ mapping $\Theta \times {\cal Y}$ to [0,1] is said to be a lower CD for a parameter $\theta$ if at the true parameter value $\theta = \theta_0$, $H_n(\theta_0) \equiv H^{\text{L}}(\theta_0, \mathbf{Y}_n)$, as a function of the sample $\mathbf{Y}_n$ is stochastically larger than a Uniform[0,1] random variable, that is,
\begin{equation}
P \left[H^{\text{L}}(\theta_0, \mathbf{Y}_n) \le \alpha  \right] \le \alpha \quad \mbox{for all} \quad \alpha \in (0,1). \label{eq:lowerCD}
\end{equation}
\noindent An upper CD  $H_n^{\text{U}}(\cdot) = H^{\text{U}}(\cdot, \mathbf{Y}_n)$ for parameter $\theta$ can be defined similarly but with (\ref{eq:lowerCD}) replaced by $P \left[H^{\text{U}}(\theta_0, \mathbf{Y}_n) \le \alpha  \right] \ge \alpha$ for all $\alpha \in (0,1)$.  \label{def:upperlowerCD}
\end{definition}

\section{Connecting FRT to CD through the $p$-value function} \label{sec:connection}
	
We note that both FRT and CD historically have an implicit ``fiducial'' flavor, although in recent developments \citep{Schweder2016, Xie2013a}, the concept of CD has been developed without any fiducial interpretation or reasoning. Some researchers consider a CD as ``a frequentist analog of a Bayesian posterior''\citep{Schweder2003b}. On the other hand, \cite{Rubin:1984} provided the following Bayesian justification of the FRT: it gives the posterior predictive distribution of the estimand of interest under a model of constant treatment effects and fixed units with fixed responses. These connections motivate us to understand the properties of FRT better by connecting it to CD and exploiting recent results on CD. At the same time, this connection is non-trivial because the theory of CD primarily revolves around parametric models, whereas FRT is essentially a model-free procedure. At the same time, the discrete nature of the distribution of the $p$-value in FRT also adds to this complication.

We first extend the notion of the $p$-value for the FRT to a $p$-value function along the lines of that introduced in Section \ref{ss:CD}. To do this, we start with a more careful handling of the notations involved. Let $\Ytrue$ denote the true matrix of potential outcomes and $\Yimptheta$ the imputed matrix under the null hypothesis $H_0^{\theta}$. Let $\Wobs$ denote the $N \times 1$ observed assigned vector and $\Yobs$ the $N \times 1$ observed vector of responses. Then the observed data from the experiment can be denoted by $\Dobs = (\Yobs, \Wobs)$. Also, let $\Wrep$ denote any repeated draw from the distribution of ${\mathbf W}$ while generating the randomization distribution of $T$. Such a repeated draw generates repeated data $\Drep = (\Yrep, \Wrep)$, where $\Yrep$ is a random sample from $\Yimptheta$. 
	
	Let $T$ be any test statistic and $\Trep$ denote the discrete random variable having the randomization distribution of $T$. The the distribution of $\Trep$ depends on the imputed potential outcomes matrix $\Yimptheta$ and $\Wrep$. Consequently, we can write 
	\begin{equation}
	\Trep = T(\Drep) = T(\Yimptheta, \Wrep). \label{eq:Trep}
	\end{equation}
	Finally, note that the observed value of the test statistic $\Tobs$ depends on $\Dobs$, and consequently on $\Ytrue$ and $\Wobs$, allowing us to write
	\begin{equation}
	\Tobs = T(\Dobs) = T(\Ytrue, \Wobs). \label{eq:Tobs}
	\end{equation}

\subsection{$p$-value functions for one-sided alternatives of the sharp null} \label{ss:one-sided}

Whereas the sharp null hypothesis has been widely discussed in literature, the alternative hypothesis against which the sharp null is tested has seldom been mentioned. In this paper, we will keep our alternatives restricted to the class of sharp-nulls to make the interval estimation problem readily interpretable. A violation of the sharp null can be one-sided or two-sided. Below, we define $p$-value functions for one-sided alternative hypotheses.

\begin{definition} \label{def:p_plus}
Consider the one sided alternative 
\begin{equation}
H_1^{\theta^+}: Y_i(1) - Y_i(0) = \phi (> \theta), \label{eq:H1+}
\end{equation}
for all $i = 1, \ldots, N$. Assuming that larger values of the test statistic $T$ indicate departure from the sharp null in favor of $H_1^{\theta^+}$, we define the following $p$-value functions for testing $H_0^{\theta}$ against alternatives $H_1^{\theta^+}$ as: 
\begin{eqnarray}
p^{L+}(\Dobs, \theta) &=& P(\Trep \ge \Tobs) = P \left( T(\Yimptheta, \Wrep) \ge T(\Dobs) \right)  \label{eq:p_right_L} \\
p^{U+}(\Dobs,\theta) &=& P(\Trep > \Tobs) = P \left( T(\Yimptheta, \Wrep) > T(\Dobs) \right).  \label{eq:p_right_U}
\end{eqnarray} 
\end{definition}

\begin{definition} \label{def:p_minus}
\begin{equation}
H_1^{\theta^-}: Y_i(1) - Y_i(0) = \psi (< \theta), \label{eq:H1-}
\end{equation} 
for all $i = 1, \ldots, N$. Assuming that smaller values of the test statistic $T$ indicate departure from the sharp null in favor of $H_1^{\theta^-}$, we define the $p$-value function for testing $H_0^{\theta}$ against alternatives $H_1^{\theta^-}$ as 
\begin{eqnarray}
p^{L-}(\Dobs, \theta) &=& P(\Trep \le \Tobs)  = P \left( T(\Yimptheta, \Wrep) \le T(\Dobs) \right). \label{eq:p_left_L} \\
p^{U-}(\Dobs, \theta) &=& P(\Trep < \Tobs)  = P \left( T(\Yimptheta, \Wrep) < T(\Dobs) \right). \label{eq:p_left_U}
\end{eqnarray} 
\end{definition}

Note that the $p$-value functions defined in (\ref{eq:p_right_L})-(\ref{eq:p_left_U}) are random variables because of their dependence on $\Dobs$. However, conditional on $\Dobs$, they are functions of $\theta$.

\begin{proposition} \label{prop:FRTlowerCD}
For any test statistic $T$, the $p$-value functions defined in (\ref{eq:p_right_L})-(\ref{eq:p_left_U}) satisfy the following properties:
\begin{itemize}
\item[(a)] Both $p^{L+}(\Dobs, \theta)$ in (\ref{eq:p_right_L}) and $p^{L-}(\Dobs, \theta)$ in (\ref{eq:p_left_L}) are lower CDs as per Definition \ref{def:upperlowerCD}, which means they both stochastically dominate the Uniform[0,1] random variable at the true value $\theta_0$ of $\theta$ and satisfy 
$$P \left(p^{L+}(\Dobs, \theta_0) \le \alpha \right) \le \alpha, \quad \mbox{and} \quad P \left(p^{L-} (\Dobs, \theta_0) \le \alpha \right) \le \alpha,$$ for $\alpha \in (0,1)$, where $\theta_0$ is the true value of $\theta$. 
\medskip
\item[(b)] Both $p^{U+}(\Dobs, \theta)$ in (\ref{eq:p_right_U}) and $p^{U-}(\Dobs, \theta)$ in (\ref{eq:p_left_U}) are upper CDs in the sense that they are both stochastically dominated by the Uniform[0,1] random variable and satisfy 
$$P \left(p^{U+}(\Dobs, \theta_0) \le \alpha \right) \ge \alpha, \quad \mbox{and} \quad P \left(p^{U-} (\Dobs, \theta_0) \le \alpha \right) \ge \alpha \quad ,$$ for $\alpha \in (0,1)$, where $\theta_0$ is the true value of $\theta$. 
\item[(c)] Let $T_{(1)} < T_{(2)} < \cdots < T_{(m)}$ be the $m$ unique ordered values of $T$ for $\theta = \theta_0$ and $\gamma_i=P\left(T(\Ytrue, \mathbf{W})=T_{(i)}\right)>0$ for $i=1,2,\cdots, m$. Then, for any $\alpha \in (0,1)$, 
\begin{eqnarray}
P \left( p^{L+}(\Dobs,\theta_0) \le \alpha \right) \ge \alpha - \gamma^*, \quad
P \left( p^{L-}(\Dobs,\theta_0) \le \alpha \right) \ge \alpha - \gamma^*, \nonumber \\
P \left( p^{U+}(\Dobs,\theta_0) \le \alpha \right) \le \alpha + \gamma^*, \quad
P \left( p^{U-}(\Dobs,\theta_0) \le \alpha \right) \le \alpha + \gamma^* \label{eq:discrepancy}
\label{eq: error inequality}
\end{eqnarray}
where $\gamma^* = \max\{\gamma_1,\gamma_2,\cdots,\gamma_m\}$.
\end{itemize}
\end{proposition}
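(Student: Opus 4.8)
The plan is to reduce all four claims to a single deterministic fact about discrete tail probabilities, after first establishing the key distributional identity at the true parameter value. The starting observation is that at $\theta = \theta_0$, imputing the missing potential outcomes under the (true) sharp null $H_0^{\theta_0}$ recovers the true matrix exactly, i.e.\ $\Yimpthetazero = \Ytrue$, and this holds as a \emph{fixed} matrix regardless of which units were actually assigned to treatment, since the constant-effect structure recovers both potential outcomes of every unit from its observed outcome and $\theta_0$. Consequently, conditional on $\Dobs$, the randomization draw $\Trep = T(\Yimpthetazero, \Wrep) = T(\Ytrue, \Wrep)$ has exactly the (unconditional) law of $T(\Ytrue, \mathbf{W})$ under the assignment mechanism, a law that does not depend on $\Dobs$; and under repeated draws of $\Wobs$ from the same mechanism, $\Tobs = T(\Ytrue, \Wobs)$ carries this identical law. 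Writing $F(t) = P(T(\Ytrue,\mathbf{W}) \le t)$ and $G(t) = P(T(\Ytrue,\mathbf{W}) \ge t)$ for the inclusive lower- and upper-tail functions (and $\bar F, \bar G$ for their strict counterparts), I would record the identities $p^{L+}(\Dobs,\theta_0) = G(\Tobs)$, $p^{U+}(\Dobs,\theta_0) = \bar G(\Tobs)$, $p^{L-}(\Dobs,\theta_0) = F(\Tobs)$, $p^{U-}(\Dobs,\theta_0) = \bar F(\Tobs)$, in each case with $\Tobs$ a random variable supported on $T_{(1)} < \cdots < T_{(m)}$ with masses $\gamma_1, \ldots, \gamma_m$.

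Having made this reduction, parts (a) and (b) become the classical super-uniformity and sub-uniformity of discrete $p$-values, which I would prove by a level-set computation rather than by citation, since the same computation delivers (c). For $p^{L+}$, note $G$ is constant on each atom and $G(T_{(i)}) = \sum_{j \ge i}\gamma_j$ is decreasing in $i$; letting $i^*$ be the smallest index with $\sum_{j \ge i}\gamma_j \le \alpha$, the event $\{G(\Tobs) \le \alpha\}$ is exactly $\{\Tobs \ge T_{(i^*)}\}$, whose probability is $\sum_{j \ge i^*}\gamma_j = G(T_{(i^*)}) \le \alpha$, giving the lower-CD property; the case of $p^{L-}$ follows by the mirror-image argument with $F$ increasing. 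For the strict tails I would run the parallel computation: with $\bar G(T_{(i)}) = \sum_{j > i}\gamma_j$ and $i^*$ the smallest index with $\bar G(T_{(i)}) \le \alpha$, the event $\{\bar G(\Tobs) \le \alpha\}$ equals $\{\Tobs \ge T_{(i^*)}\}$ with probability $\sum_{j \ge i^*}\gamma_j = \gamma_{i^*} + \bar G(T_{(i^*)})$; minimality of $i^*$ forces $\sum_{j \ge i^*}\gamma_j > \alpha$ (the probability being $1$ when $i^*=1$), establishing the upper-CD property for $p^{U+}$, and symmetrically for $p^{U-}$.

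For part (c) I would reuse the very quantities produced above and read off two-sided bounds. For the lower CDs, minimality of $i^*$ in the inclusive case gives $\sum_{j \ge i^*-1}\gamma_j > \alpha$, hence $P(p^{L+}(\Dobs,\theta_0) \le \alpha) = \sum_{j \ge i^*}\gamma_j > \alpha - \gamma_{i^*-1} \ge \alpha - \gamma^*$ (and the bound is trivial when $i^* = 1$), and likewise for $p^{L-}$. For the upper CDs, the strict-tail computation gives $P(p^{U+}(\Dobs,\theta_0) \le \alpha) = \gamma_{i^*} + \bar G(T_{(i^*)}) \le \gamma_{i^*} + \alpha \le \alpha + \gamma^*$, and likewise for $p^{U-}$; together these yield the four inequalities in (c).

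The only genuine obstacle is the first step: making rigorous that at $\theta_0$ the conditional randomization law of $\Trep$ coincides with the marginal law of $\Tobs$, so that the inner probability over $\Wrep$ and the outer probability over $\Wobs$ are governed by one and the same discrete distribution. Once $\Yimpthetazero = \Ytrue$ is recognized as a fixed matrix, everything downstream is elementary discrete bookkeeping, and part (c) makes precise that the deviation of these $p$-value functions from exact uniformity is controlled by the largest atom $\gamma^*$ and vanishes in the continuous limit $\gamma^* \to 0$.
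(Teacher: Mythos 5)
Your proposal is correct and follows essentially the same route as the paper: the key identity $\Yimpthetazero = \Ytrue$ reducing both $\Trep$ and $\Tobs$ to a single discrete law supported on $T_{(1)} < \cdots < T_{(m)}$ with masses $\gamma_1, \ldots, \gamma_m$, followed by the same discrete tail-probability bookkeeping (your level-set computation with $G$, $\bar G$, $F$, $\bar F$ is just a reorganization of the paper's enumeration of the attainable values $\gamma_m, \gamma_m + \gamma_{m-1}, \ldots$ of the $p$-value and their masses). If anything, your write-up is more complete, since you carry out the strict-tail case for part (b) and all four inequalities of part (c) explicitly, where the paper appeals to ``similar arguments.''
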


\noindent\textbf{Implications of Proposition \ref{prop:FRTlowerCD} and some remarks}

\begin{enumerate}
\item Consider testing the sharp null hypothesis (\ref{eq:H0}) against one-sided alternatives (\ref{eq:H1+}) or (\ref{eq:H1-}) using a test statistic whose large or small values indicate departure from the null in favor of (\ref{eq:H1+}) or (\ref{eq:H1-}) respectively. By part (a) of Proposition \ref{prop:FRTlowerCD}, the test procedure that rejects the sharp null if the observed value of $p^{L+}(\Dobs, \theta) \le \alpha$ is \emph{valid} in the sense that it the probability of Type-I error does not exceed  $\alpha$. However, by part(b), the rejection rule $p^{U+}(\Dobs, \theta) \le \alpha$ is not valid. Similarly, use of $p^{L-}(\Dobs, \theta)$ for the one-sided alternative (\ref{eq:H1-}) leads to a valid test, while use of $p^{U+}(\Dobs, \theta)$ does not.

\item Equations (\ref{eq:discrepancy}) provide upper bounds for the discrepancies between the empirical CDFs of  the four $p$-value functions given by (\ref{eq:p_right_L})-(\ref{eq:p_left_U}) from the CDF of a Uniform[0,1] variable. This is illustrated in the left panel of Figure \ref{fig:gammabound} that shows a partial plot of the empirical CDF of $p^{L+}(\Dobs,\theta_0)$ (based on the toy example in the supplementary material) against the Uniform[0,1] CDF. In this example, $\gamma$'s take only two values: 1/252 and 2/252, and the maximum discrepancy between the CDFs is seen as $\max \gamma_m = 2/252$ on three occasions. A similar plot of the empirical CDF of $p^{L-}(\Dobs,\theta_0)$ against the Uniform[0,1] CDF is shown in the right panel.

%To illustrate these bounds with Example 1, we note that the 252 possible permutations of $\Wobs$ generate 240 unique values of the test statistic $\ybarone - \ybarzero$, with 12 values repeated twice and the remaining 228 values occurring only once. Thus $m = 240$, and the $\gamma$'s take only two values: 1/252 and 2/252. Consequently, by part(c) of Proposition  \ref{prop:FRTlowerCD}, the gap between the Uniform[0,1] CDF and the empirical CDF of $p^{L+}(\Dobs,\theta_0)$ (and the other $p$-values) cannot exceed 2/252. This is illustrated in Figure \ref{fig:gammabound} where the empirical CDF of $p^{L+}(\Dobs,\theta_0)$ is plotted against the Uniform[0,1] CDF. The left panel shows the empirical CDF constructed with all the realized values of $p^{L+}(\Dobs,\theta_0)$, and is hardly distinguishable from the Unif[0,1] CDF. The right panel shows a xoomed-in version of the same figure for values in [.65,.75], where the stochastic order between $p^{L+}(\Dobs,\theta_0)$ and Uniform[0,1] is clear, and the maximum discrepancy between the CDFs is seen as $\max \gamma_m = 2/252$.

	\begin{figure}[ht]
		\centering
		\caption{\small Empirical CDFs of $p^{L+}(\Dobs,\theta_0)$ (left panel) and $p^{L-}(\Dobs,\theta_0)$ (right panel) versus Uniform[0,1] CDF showing the discrepancy $\gamma$'s} \label{fig:gammabound}
		\centering{\includegraphics[width=15 cm]{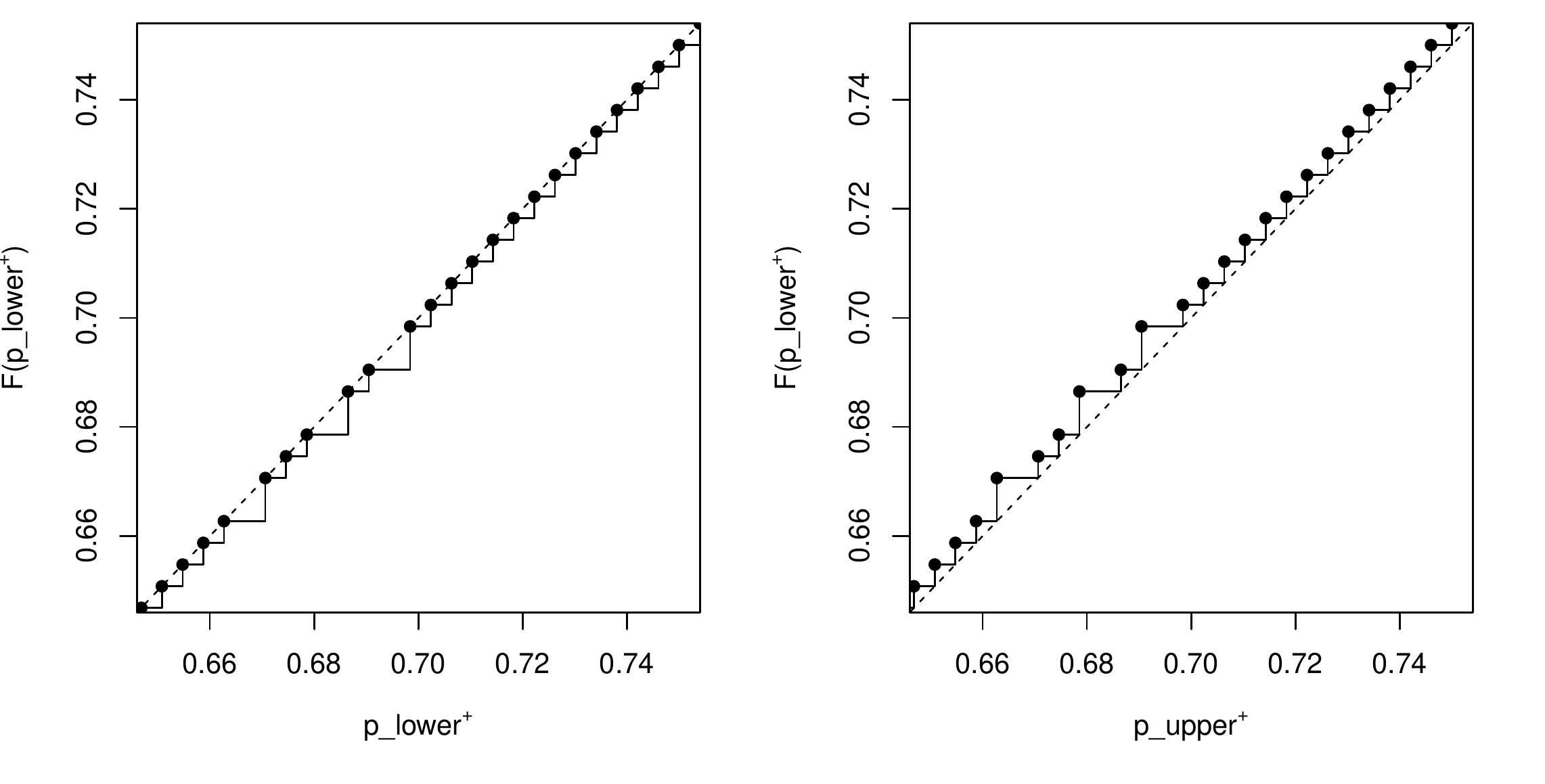}}
		\singlespace
	\end{figure} 

\end{enumerate}

\subsection{Two-sided alternatives} \label{ss:two-sided}

We now consider testing the sharp null $H_0^{\theta}$ against a two-sided alternative hypotheses
\begin{equation} 
H_1^{\theta_{\pm}}: Y_i(1) - Y_i(0)= \eta \  (\ne \theta), \ \mbox{for all} \ i=1, \ldots, N. \label{eq:H1both}
\end{equation} 

\begin{definition} \label{def:p_plusminus}
The $p$-value function for testing $H_0^{\theta}$ against alternatives $H_1^{\theta^{\pm}}$ is
\begin{equation}
p^{L}(\Dobs, \theta) =	2 \min \big\{ p^{L+}(\Dobs, \theta), p^{L-}(\Dobs, \theta)\big\}, \label{eq:p_twosided}
\end{equation}
where $p^{L+}(\Dobs, \theta)$ and $ p^{L-}(\Dobs, \theta)$ are defined in (\ref{eq:p_right_L}) and (\ref{eq:p_left_L}) respectively.
\end{definition}

\begin{figure}[ht]
			\centering \caption{$p^{L}(\Dobs,\theta)$ vs $\theta$} \label{fig:pplusminus}
				\includegraphics[width=8cm]{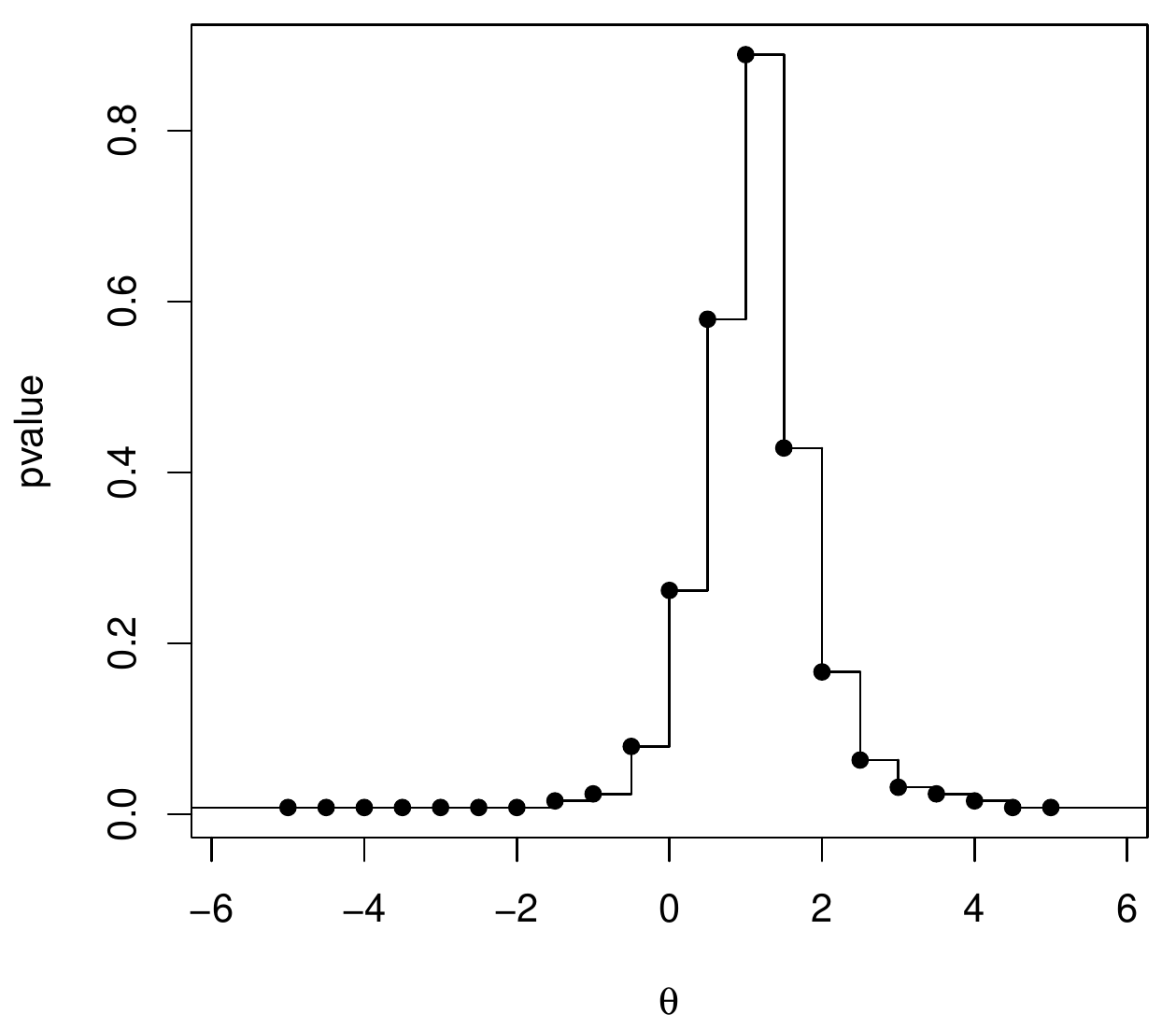} 
		\end{figure}
		
The function $p^{L}(\Dobs, \theta)$ can be considered a discrete version of a CV function. By part (a) of Proposition \ref{prop:FRTlowerCD},  $p^{L+}(\Dobs, \theta)$ and $p^{L-}(\Dobs, \theta)$ stochasticaly dominate a Uniform[0,1] random variable when $\theta = \theta_0$ and thus is a valid $p$-value function to test $H_0^{\theta}$ against $H_1^{\theta_{\pm}}$. Note that, if the $p$-value function for this two-sided testing problem had been constructed along the lines of the CV function introduced in Section \ref{ss:CD} as
 $$2\min\{p^{L+}(\Dobs, \theta),1-p^{L+}(\Dobs, \theta)\}=2\min\Big\{p^{L+}(\Dobs, \theta), p^{U-}(\Dobs,\theta)\Big\},$$  
then it would not have dominated a Uniform[0,1] random variable by part (b) of 	Proposition \ref{prop:FRTlowerCD}.	

Figure \ref{fig:pplusminus} illustrates a $p^{L}(\Dobs, \theta)$ function based on the toy example in the supplementary materials.

\section{Inverting the FRT to obtain confidence intervals} \label{sec:CI}
	
As briefly mentioned in the introductory section, the procedure of inverting FRTs to obtain intervals for treatment effects has been described rather loosely in literature. For example, \cite{Dasgupta:2015} described the procedure for obtaining a $100(1-\alpha)\%$ Fisherian interval as (i) Computing $p$-values $P(\Trep \ge \Tobs)$ (which, in our notation, is $p^{L+}(\Dobs, \theta)$) by testing a sequence of sharp null hypothesis for different values of $\theta$ to create a $p$-value function $p(\theta)$ and (ii) obtaining the lower and upper limits for $\theta$ by inverting the function at $\alpha/2$ and $1-\alpha/2$ respectively. Such a procedure is demonstrated in the lower right panel of Figure \ref{fig:Trep_hist} in the supplementary material. This procedure has two problems. First, it implicitly assumes that $p^{L+}(\Dobs, \theta)$ is a monotonically increasing function of $\theta$. However, such monotonicity is not guaranteed by Proposition  \ref{prop:FRTlowerCD}. Thus, inversion of the $p$-value function will not necessarily produce intervals for the treatment effect. We will soon see that one needs to impose additional conditions on the test statistic to guarantee monotonicity of the $p$-value function.

The second problem arises from the fact that even if we assume that $p^{L+}(\Dobs, \theta)$ is monotonic, part (a) of Proposition  \ref{prop:FRTlowerCD} implies:
\begin{eqnarray*}
P \left(p^{L+} (\Dobs, \theta_0) \le \alpha/2 \right) \le \alpha/2, \ \mbox{but} \ P \left(p^{L+}(\Dobs, \theta_0) \ge 1-\alpha/2 \right) \ge \alpha/2.
\end{eqnarray*}

Therefore, inverting the same function $p^{L+}(\Dobs, \theta)$ does not guarantee that the coverage of the generated interval will be $1-\alpha$. In this following subsections, we address these two issues: monotonicity of the $p$-value functions and the correct way of inverting them to guarantee the right coverage.
	
\subsection{Monotonicity of the $p$-value function} \label{ss:monotonicity}

We first give a counter example that the function $p^{L+} (\Dobs, \theta)$ can be non-monotonic in specific settings. Consider an similar to Example \ref{example:toy1} discussed earlier. 
\begin{example} \label{example:toy3}
Consider the following Table of potential outcomes generated from two lognormal distributions where the true additive treatment effect is 1.
\begin{table} [ht]
\centering
\caption{Potential outcomes and observed data in example 3} \label{tab:toy2}
\begin{tabular}{c|cc|c|c}       
Unit($i$) & $Y_i(0)$ & $Y_i(1)$ & $W_i^{\textrm{obs}}$ & $Y_i^{\textrm{obs}}$ \\ \hline
1  & 0.14 & 1.14  &  1 & 1.14 \\
2 & 1.12  & 2.12 &   1 & 2.12 \\
3  & 0.80 & 1.80 &   0 & 0.80 \\
4  & 1.80 & 2.80 &   1 & 2.80 \\
5  & 0.90 & 1.90 &   0 & 0.90 \\
6  & 0.44 & 1.44 &   0 & 0.44 \\
7  & 1.13 & 2.13 &   1 & 2.13 \\
8  & 0.53 & 1.53 &   0 & 0.53 \\\hline
			\end{tabular}
		\end{table}
Consider two test statistics to test the sharp null $H_0^{\theta}$ defined in (\ref{eq:H0}) against the one-sided alternative (\ref{eq:H1+}). The first is the unnormalized difference of treatment means  $T_1 = \ybarone - \ybarzero$ used in all the previous examples, and the second is the ``studentized'' Fisher-Behren type statistic $T_2 =  \left\{\ybarone - \ybarzero \right\}/ \left\{\sqrt{s_1^2/4 + s_0^2/4} \right\}$ where $s_1^2$ and $s_0^2$ are the sample variances of the observed treatment and control outcomes respectively. By part (b) of Proposition  \ref{prop:FRTlowerCD}, the test procedure that rejects (\ref{eq:H0}) if $p^{L+}(\Dobs, \theta) \le \alpha$ is a valid procedure, irrespective of whichever test statistic is used. However, the plots of $p$-value functions $p^{L+}(\Dobs, \theta)$ constructed with $T_1$ and $T_2$, as shown in Figure \ref{fig:monotonicity}, reveal a different story: the function  $p^{L+}(\Dobs, \theta) \le \alpha$ constructed with statistic $T_2$ is non-monotonic, and hence its inversion does not necessarily generate an interval for every choice of $\alpha \in (0,1)$. However, it is somewhat relieving to note that the function is monotonic in some neighborhood of $\theta_0 = 1$.
\begin{figure}[ht]
		\centering
		\caption{\small $p^{L+}(\Dobs, \theta)$ obtained from the data in Table \ref{tab:toy2} using test statistic $T_1 = \ybarone - \ybarzero$ (left panel) and $T_2 = \left\{\ybarone - \ybarzero \right\}/ \left\{\sqrt{s_1^2/4 + s_0^2/4}\right\}$ (right panel) } \label{fig:monotonicity}
      \vspace{0.1 in}
		\centering{\includegraphics[width=15 cm]{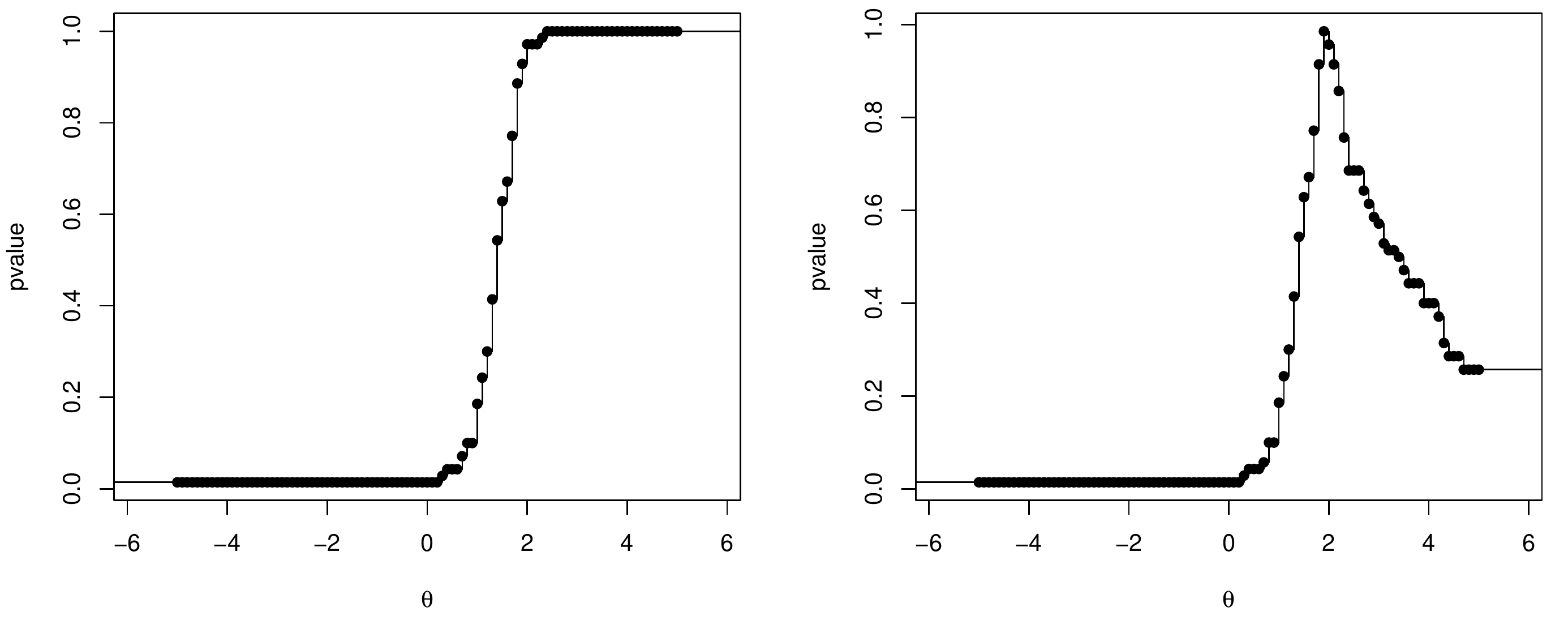}}
		\singlespace
	\end{figure} 
\end{example}

Example \ref{example:toy3} indicates that the behavior of the $p$-value functions defined in (\ref{eq:p_right_L})-(\ref{eq:p_left_U}) depend on the choice of the test statistic $T$. We now provide a result that gives a set of sufficient conditions to guarantee that $p^{L+}(\Dobs, \theta)$ ``behaves'' like a CDF in the sense that is monotonically non-decreasing and right continuous. We first introduce the following definitions along the lines of \cite{Caughey:2017}.

\begin{definition}[\textbf{Ordered vectors of potential outcomes}] \label{def:order_PO}
Two vectors of potential outcomes under treatment $\mathbf{Y}(1) = \left(Y_i(1), \ldots, Y_N(1) \right)$ and $\mathbf{Y}^{\prime}(1) = \left(Y_1^{\prime}(1), \ldots, Y_N^{\prime}(1) \right)$ are ordered as $\mathbf{Y}(1) \le \mathbf{Y}^{\prime}(1)$ if $Y_i(1) \le Y^{\prime}_i(1)$ for all $i=1, \ldots, N$. An order between two vectors of potential outcomes under control, $\mathbf{Y}(0)$ and $\mathbf{Y}^{\prime}(0)$ is similarly defined.
\end{definition}

\cite{Caughey:2017} introduced the notion of an ``effect increasing'' (EI) statistic in the context of testing null hypotheses that are weaker than the sharp null. A definition of EI test statistic is given below.

\begin{definition}[\textbf{Effect increasing (EI) test statistic}] \label{def:EI_statistic}
A test statistic $T(\mathbf{Y},\mathbf{W}) = T(\mathbf{Y}(1), \mathbf{Y}(0), \mathbf{W})$ is said to possess the EI property if it is  non-decreasing in $\mathbf{Y}(1)$ and non-increasing in $\mathbf{Y}(0)$. 
\end{definition}

%In particular, one class of EI statistics considered by \cite{Caughey:2017} is of the following form:
%	\begin{equation}
%			T(\mathbf{Y},\mathbf{W})=a(\mathbf{W})\sum_{i=1}^N W_iQ(Y_i(1))-b(\mathbf{W})\sum_{i=1}^N(1-W_i)Q(Y_i(0)), \label{eq:statistic} 
	%		\end{equation} where $\mathbf{Y}$ is the $N \times 2$ matrix of potential outcome, $\mathbf{W}$ is an assignment vector, $a(\mathbf{W})>0, b(\mathbf{W})>0$ depend on the treatment assignment $\mathbf{W}$, and $Q: x\rightarrow Q(x)$ is an increasing and right continuous function on $\mathbb{R}$ such that $\displaystyle{\lim_{x\rightarrow \infty}Q(x)=\infty}$.

\noindent Examples of EI statistics include difference in means or Wilcoxon rank sum statistic. On the other hand, the studentized Fisher-Behren type statistic defined in Example \ref{example:toy3} does not satisfy the EI property.

\begin{theorem} \label{thm:monotonicity}

\begin{enumerate}
\item[(a)] If the test statistic $T$ is EI, then the $p$-value function $p^{L+}(\Dobs, \theta)$ defined in (\ref{eq:p_right_L}) is non-decreasing in $\theta$ for fixed $\Dobs$, 
\item[(b)] Further, for fixed $\mathbf{W}$, if $T(\Yimptheta, \mathbf{W})$ is right continuous as a function of $\theta$, and approaches $-\infty$ and $+\infty$ as $\theta \rightarrow -\infty$ and $\theta \rightarrow +\infty$ respectively, then (i) $p^{L+}(\Dobs, \theta) \rightarrow 1$ as $\theta \rightarrow \infty$ and $p^{L+}(\Dobs, \theta) \rightarrow P(\mathbf{W}=\Wobs)$ as $\theta \rightarrow -\infty$, (ii) $p^{L+}(\Dobs, \theta)$ is right continuous in $\theta$.
\end{enumerate}
\end{theorem}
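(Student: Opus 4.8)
The plan is to begin by making explicit how the imputed matrix $\Yimptheta$ depends on $\theta$ and by recording two facts that the rest of the argument leans on. Under $H_0^{\theta}$ the missing entry of each unit is filled in by shifting its observed outcome by $\theta$: a treated unit keeps $Y_i^{\textrm{obs}}$ as its treatment outcome and is imputed $Y_i^{\textrm{obs}}-\theta$ under control, while a control unit keeps $Y_i^{\textrm{obs}}$ under control and is imputed $Y_i^{\textrm{obs}}+\theta$ under treatment. Hence, in the ordering of Definition \ref{def:order_PO}, increasing $\theta$ raises the treatment arm of $\Yimptheta$ component-wise and lowers the control arm component-wise: $\theta_1 \le \theta_2$ implies $\mathbf{Y}^{\textrm{imp}}_{\theta_1}(1) \le \mathbf{Y}^{\textrm{imp}}_{\theta_2}(1)$ and $\mathbf{Y}^{\textrm{imp}}_{\theta_1}(0) \ge \mathbf{Y}^{\textrm{imp}}_{\theta_2}(0)$. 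The two facts I would isolate are: (i) $\Tobs = T(\Dobs)$ is a constant that does not depend on $\theta$; and (ii) revealing $\Yimptheta$ under the observed assignment returns exactly $\Yobs$, so $T(\Yimptheta,\Wobs)=\Tobs$ for every $\theta$.

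For part (a) I would combine this monotone behaviour with the EI property of Definition \ref{def:EI_statistic}. Since $T$ is non-decreasing in the treatment arm and non-increasing in the control arm, the scalar map $\theta \mapsto T(\Yimptheta,\mathbf{w})$ is non-decreasing for each fixed assignment $\mathbf{w}$. Writing $p^{L+}(\Dobs,\theta)=P(T(\Yimptheta,\Wrep)\ge \Tobs)$ with $\Tobs$ held fixed by fact (i), for $\theta_1 \le \theta_2$ the event $\{T(\mathbf{Y}^{\textrm{imp}}_{\theta_1},\Wrep)\ge \Tobs\}$ is contained pointwise in $\{T(\mathbf{Y}^{\textrm{imp}}_{\theta_2},\Wrep)\ge \Tobs\}$; taking probabilities yields $p^{L+}(\Dobs,\theta_1)\le p^{L+}(\Dobs,\theta_2)$, the claimed monotonicity.

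For part (b) I exploit that in a completely randomized design $\Wrep$ has finite support, so $p^{L+}(\Dobs,\theta)=\sum_{\mathbf{w}} P(\Wrep=\mathbf{w})\,\mathbf{1}[T(\Yimptheta,\mathbf{w})\ge \Tobs]$ is a finite weighted sum of indicators, with the distinguished term $\mathbf{w}=\Wobs$ contributing $P(\Wrep=\Wobs)$ for all $\theta$ by fact (ii). For (i), as $\theta\to+\infty$ the hypothesis $T(\Yimptheta,\mathbf{w})\to+\infty$ forces every indicator to equal one for large $\theta$, so the finite sum tends to $\sum_{\mathbf{w}} P(\Wrep=\mathbf{w})=1$; as $\theta\to-\infty$ every term with $\mathbf{w}\ne\Wobs$ has $T(\Yimptheta,\mathbf{w})\to-\infty$ and thus vanishes, leaving only the $\Wobs$ term and giving the limit $P(\Wrep=\Wobs)$. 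For (ii) I fix $\mathbf{w}$ and set $g_{\mathbf{w}}(\theta):=T(\Yimptheta,\mathbf{w})$, which is non-decreasing by part (a) and right continuous by hypothesis; its super-level set $\{\theta: g_{\mathbf{w}}(\theta)\ge \Tobs\}$ is then a right-infinite interval closed on the left (right continuity promotes its infimum into the set), so $\mathbf{1}[g_{\mathbf{w}}(\theta)\ge \Tobs]$ is right continuous, and a finite weighted sum of right-continuous functions is right continuous.

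The point I expect to require the most care is the treatment of the observed-assignment term. The stated hypothesis $T(\Yimptheta,\mathbf{W})\to\pm\infty$ cannot literally hold at $\mathbf{W}=\Wobs$, where fact (ii) pins the statistic at $\Tobs$; it is precisely this non-vanishing term that produces the asymmetric left limit $P(\Wrep=\Wobs)$ rather than $0$, so I would state the limiting hypotheses for $\mathbf{w}\ne\Wobs$ and isolate the $\Wobs$ contribution throughout. A secondary delicacy is that right continuity of the indicator genuinely needs \emph{both} monotonicity and right continuity of $g_{\mathbf{w}}$: right continuity alone would fail were $g_{\mathbf{w}}$ to meet $\Tobs$ from above. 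This is why I carry the EI assumption (and hence the monotonicity from part (a)) into part (b).
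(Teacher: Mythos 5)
Your proof is correct and takes essentially the same route as the paper's: the imputation formula making $\Yimptheta(1)$ non-decreasing and $\Yimptheta(0)$ non-increasing in $\theta$, the EI property giving pointwise monotonicity of $\theta \mapsto T(\Yimptheta,\mathbf{w})$ and hence event inclusion for part (a), and isolation of the fixed term $T(\Yimptheta,\Wobs)=\Tobs$ to obtain the limits $1$ and $P(\mathbf{W}=\Wobs)$ in part (b). Your treatment of right continuity is actually more careful than the paper's one-line assertion, since you correctly note that both monotonicity and right continuity of $\theta\mapsto T(\Yimptheta,\mathbf{w})$ are needed to make the super-level sets left-closed, and that the divergence hypothesis must be read as applying only to assignments $\mathbf{w}\ne\Wobs$.
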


\begin{remark}  \label{remark:pminusmonotonicity}
Results similar to Theorem \ref{thm:monotonicity} also hold for $p^{L-}(\Dobs, \theta)$ defined in (\ref{eq:p_left_L}), which is non-increasing if $T$ is EI.
\end{remark}

\begin{remark}  \label{remark:Caughey-monotonicity}
\cite{Caughey:2017} pointed out the important role of EI statistics in constructing valid tests for null hypothesis that are weaker than the sharp null. Theorem  \ref{thm:monotonicity} establishes that this condition is sufficient for monotonicity of $p$-value functions in FRT.

\end{remark}

\subsection{Algorithm for generating intervals with coverage at least $1-\alpha$} \label{ss:CI_algorithm}

From the foregoing discussion, it is clear that the ``traditional'' approach of inverting just \emph{one} $p$-value function based on an \emph{arbitrary} test statistic does not necessarily led to generation of one or two-sided intervals with the desired coverage. Based on (i) the properties of the $p$-value functions in Proposition \ref{prop:FRTlowerCD}, (ii) desription of valid procedures for testing the sharp null against one or two sided hypotheses in Section \ref{sec:connection}, and (iii) conditions required to guarantee that inversion of $p$-value functions will generate intervals as stated in Theorem \ref{thm:monotonicity}, we now arrive at the following proposition that provides us with a rule to generate confidence intervals with the desired coverage.

\begin{proposition} \label{prop:CI}
Assume that for fixed $\Dobs$, the $p$-value functions $p^{L+}(\Dobs,\theta)$ and $p^{L-}(\Dobs,\theta)$ are (i) respectively non-decreasing and non-increasing and (ii) right continuous functions of $\theta$. 
\begin{enumerate}
\item[(a)] Define $\theta_{\ell}(\alpha) = \sup_\theta\{\theta: p^{L+}(\Dobs,\theta) \le \alpha\}.$ Then the one-sided interval $\left[\theta_{\ell}(\alpha), \infty \right)$ covers the true value of $\theta$ with probability of at least $1-\alpha$.
\item[(b)] Define $\theta_u(\alpha) = \inf_\theta\{\theta: p^{L-}(\Dobs,\theta) \le \alpha\}$. Then the one-sided interval $\left(-\infty, \theta_u(\alpha) \right)$ covers the true value of $\theta$ with probability of at least $1-\alpha$.
\item[(c)] For $0 < \alpha_1, \alpha_2 < 1$ and $\alpha_1 + \alpha_2 = \alpha$, define $\theta_{\ell}(\alpha_1) = \sup_\theta\{\theta: p^{L+}(\Dobs,\theta) \le \alpha_1\}$ and $\theta_{u}(\alpha_2) = \inf_\theta\{\theta: p^{L-}(\Dobs,\theta) \le \alpha_2\}$. Then the two-sided interval $\left[\theta_{\ell}(\alpha_1),\theta_u(\alpha_2) \right)$ covers the true value of $\theta$ with probability of at least $1-\alpha$.
\end{enumerate}
\end{proposition}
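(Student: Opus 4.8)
The plan is to reduce each coverage statement to a bound on the probability that the interval \emph{misses} $\theta_0$, to translate that miss event into a sublevel-set event of the relevant $p$-value function, and then to invoke the lower-CD property from part (a) of Proposition~\ref{prop:FRTlowerCD}, namely $P(p^{L+}(\Dobs,\theta_0)\le\alpha)\le\alpha$ and $P(p^{L-}(\Dobs,\theta_0)\le\alpha)\le\alpha$. The monotonicity and right-continuity hypotheses are exactly what is needed to turn the translation between a miss event and a sublevel set into an \emph{inclusion} of events, after which the lower-CD bound finishes the argument in one line.

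For part (a), the interval $[\theta_{\ell}(\alpha),\infty)$ fails to cover $\theta_0$ precisely when $\theta_{\ell}(\alpha)>\theta_0$, so it suffices to show $P(\theta_{\ell}(\alpha)>\theta_0)\le\alpha$. The key step is the set inclusion
$$\{\theta_{\ell}(\alpha)>\theta_0\}\subseteq\{p^{L+}(\Dobs,\theta_0)\le\alpha\}.$$
To see this I would argue that on the event $\theta_{\ell}(\alpha)>\theta_0$, the definition of $\theta_{\ell}(\alpha)$ as a supremum yields some $\theta'>\theta_0$ with $p^{L+}(\Dobs,\theta')\le\alpha$; monotonicity (non-decreasing in $\theta$) then gives $p^{L+}(\Dobs,\theta_0)\le p^{L+}(\Dobs,\theta')\le\alpha$. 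Taking probabilities and applying Proposition~\ref{prop:FRTlowerCD}(a) delivers the bound, hence $P(\theta_0\ge\theta_{\ell}(\alpha))\ge 1-\alpha$. Part (b) is the mirror image: the interval $(-\infty,\theta_u(\alpha))$ misses $\theta_0$ exactly when $\theta_0\ge\theta_u(\alpha)$, and I would establish $\{\theta_0\ge\theta_u(\alpha)\}\subseteq\{p^{L-}(\Dobs,\theta_0)\le\alpha\}$ using that $p^{L-}$ is non-increasing. Here the boundary case $\theta_0=\theta_u(\alpha)$ is where right-continuity is genuinely used: since $\theta_u(\alpha)$ is an infimum there is a sequence $\theta_k\downarrow\theta_u(\alpha)$ with $p^{L-}(\Dobs,\theta_k)\le\alpha$, and right-continuity passes this to $p^{L-}(\Dobs,\theta_u(\alpha))\le\alpha$.

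For part (c), I would combine (a) and (b) with a union (Bonferroni) bound. The two-sided interval $[\theta_{\ell}(\alpha_1),\theta_u(\alpha_2))$ fails to cover $\theta_0$ only if $\theta_0<\theta_{\ell}(\alpha_1)$ or $\theta_0\ge\theta_u(\alpha_2)$, so
$$P(\text{miss})\le P(\theta_{\ell}(\alpha_1)>\theta_0)+P(\theta_0\ge\theta_u(\alpha_2))\le\alpha_1+\alpha_2=\alpha,$$
giving coverage at least $1-\alpha$. No independence between the two miss events is required, which is what makes Bonferroni the natural tool here.

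The main obstacle I anticipate is not the probabilistic content — that is essentially immediate from the lower-CD property — but the careful bookkeeping of the event inclusions and the open/closed endpoint conventions. One must verify that each relation is a \emph{genuine} inclusion (rather than an equality that would instead force a two-sided CD bound), and one must match each endpoint's openness to the tool invoked: the closed lower endpoint in (a) makes its miss event the strict inequality $\theta_{\ell}(\alpha)>\theta_0$, which needs only monotonicity, whereas the open upper endpoint in (b) makes its miss event the inequality $\theta_0\ge\theta_u(\alpha)$ that includes the boundary and therefore calls on right-continuity. Aligning these conventions with the one-sided (lower-CD) inequalities that are actually available is the delicate part of the write-up.
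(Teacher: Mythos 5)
Your proposal is correct and takes essentially the same route as the paper: the same event inclusions (yours are simply the contrapositives of the paper's, since the paper works with the coverage event and the intersection bound $P(A\cap B)\ge P(A)+P(B)-1$ while you work with the miss events and the union bound, which are the same inequality), the same use of monotonicity for the lower endpoint and right-continuity at the boundary case of the upper endpoint, and the same final appeal to the lower-CD property in Proposition~\ref{prop:FRTlowerCD}(a). The only organizational difference is that you prove (a) and (b) first and deduce (c) by Bonferroni, whereas the paper proves (c) directly and remarks that (a) and (b) are similar.
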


Proposition \ref{prop:CI} provides us with methods to determine intervals for the treatment effect with the desired coverage. The most straightforward approach is to obtain the interval $\left[\theta_{\ell}(\alpha/2), \theta_u(\alpha/2) \right]$ where $\theta_{\ell}(\alpha/2)$ and $\theta_u(\alpha/2)$ are obtained by substituting $\alpha_1 = \alpha_2 = \alpha/2$ in part (c) of Proposition \ref{prop:CI}. To obtain $\theta_{\ell}(\alpha/2)$ and $\theta_u(\alpha/2)$, one can respectively solve the equations
\begin{equation}
p^{L+}(\Dobs,\theta) = \alpha/2, \ p^{L-}(\Dobs,\theta) = \alpha/2, \label{eq:solve_two}
\end{equation}
which is equivalent to solving
\begin{equation}
p^{L}(\Dobs,\theta) = \alpha. \label{eq:solve_one}
\end{equation}
Recall that the traditional approach is to obtain the interval by solving
\begin{equation}
p^{L+}(\Dobs,\theta) = \alpha/2, \ p^{L+}(\Dobs,\theta) = 1-\alpha/2. \label{eq:solve_traditional}
\end{equation}

%Figure \ref{fig:CI_example} shows the 95\% intervals generated from data in Table \ref{tab:toy} by applying (\ref{eq:solve_traditional}) and (\ref{eq:solve_one}). The left panel shows derivation of the interval $[-.70, 2.556]$ using the traditional approach, while the right panel shows derivation of the slightly wider interval $[-.70, 2.725]$ using (\ref{eq:solve_one}). Note that the lower limits obtained by both methods are identical, as they are obtained by solving the same equations. Repeating these two procedures for each of the 252 possible data sets generated from the potential outcomes in Table \ref{tab:toy}, it is found that the coverage of the intervals obtained by using the traditional and proposed approaches are 0.948 and 0.952 respectively. This confirms that the proposed approach generates intervals with the intended coverage, whereas the traditional approach tends to generate intervals that undercover the true value of $\theta$.

%\begin{figure}[ht]
%\centering
%\caption{\small 95\% CI obtained from the data in Table \ref{tab:toy} using two methods. Left panel shows the traditional approach using only $p^{L+}(\Dobs, \theta)$. The right panel shows the proposed approach using  $p^{L}(\Dobs, \theta)$. } \label{fig:CI_example}
%\vspace{0.1 in}
%\centering{\includegraphics[width=17 cm]{figures/CI_traditional_new.pdf}}
%\singlespace
%\end{figure} 
	
The difference in coverage between the intervals generated by the traditional and proposed approach tend to be pronounced for populations of small sizes, and particularly when the number of unique values of the test statistic is small. Example \ref{example:discrete} below demonstrates this difference.

\begin{example} \label{example:discrete}
Consider a table of potential outcomes with $N=15$ in which potential outcomes $(Y(0), Y(1))$ are (0,0) for six units, (1,1) for six units and (2,2) for three units. Thus the true value of $\theta$ is zero. The coverages of the interval for $\theta$ generated by the traditional and proposed approach from this population are 0.897 and 0.961 respectively.
\end{example}
	
\section{CD as a tool for combining FRTs from independent studies} \label{sec:meta_analysis}
	
	Large $N$ studies now frequently arise from aggregation of information from multiple independent sources \citep[e.g.][]{Hemkens:2017} and require strategies for efficient meta-analysis. Several researchers \citep[e.g.][]{Pearl:2016} have emphasized on the importance of development of new methodologies for combining information from multiple sources, stating that the objective of such fusion inference is ``to combine results from many experimental and observational studies, each conducted on a different population and under a different set of conditions in order to synthesize an aggregate measure of targeted effect size that is \textit{better}, in some sense, than any one study in isolation.'' Our research is motivated by the desire to have such developments.
 
There exist several classical methods in literature to combine $p$-values from independent tests of hypotheses, e.g., Fisher's method \citep{fisher1932statistical}, Stouffer's method \citep{stouffer1949american}. See \cite{Marden1991} for a detailed review of these and other methods. However, while it is straightforward to combine $p$-values from multiple independent tests, it is not obvious how to combine the results into a \emph{composite p-value function} from which a composite interval estimator for $\theta$ can be obtained. \cite{Singh2005} and \cite{xie2011confidence} proposed a general recipe to combine CDs, and specifically $p$-value functions, that encompass all the classical methods for combining $p$-values as special cases. Exploiting the connection between FRT and CD developed in the previous two sections, and utilizing the framework of \cite{xie2011confidence}, we propose the following procedure for combining $p$-value functions for testing the same sharp null hypothesis $H_0^{\theta}$ from $M$ independent experiments.

For $i=1, \ldots, M$, let $p_i^{L+}(\theta) = p^{L+}(\Dobs_i, \theta)$ defined by (\ref{eq:p_right_L}) denote one-sided $p$-value functions obtained from $M$ independent randomized experiments. We combine the $p$-value functions $p_1^{L+}(\theta),  \ldots, p_M^{L+}(\theta)$ to obtain a combined $p$-value function 
\begin{equation}
p_c^{L+}(\theta) = G_c \left( g_c \left (p_1^{L+}(\theta),  \ldots, p_M^{L+}(\theta) \right) \right), \label{eq:comb1}
\end{equation}
 where $g_c: [0,1]^M\rightarrow \mathcal{R}$ is a continuous function that is non-decreasing in each coordinate, $G_c: \mathcal{R}\rightarrow [0,1]$ is the cumulative distribution function (CDF) of $g_c(U_1,\cdots,U_M)$ where $U_1,\cdots, U_M$ are i.i.d. Uniform[0,1] random variables. 

Similarly, the $M$ $p$-value functions $p_1^{U+}(\theta),  \ldots, p_M^{U+}(\theta)$ defined by  (\ref{eq:p_right_U}) can be combined into a single function 
\begin{equation}
p_c^{U+}(\theta) = G_c \left( g_c \left (p_1^{U+}(\theta),  \ldots, p_M^{U+}(\theta) \right) \right). \label{eq:comb2}
\end{equation}

\begin{proposition} \label{prop:combined}
The combined $p$-value functions $p_c^{L+}(\theta)$ and $p_c^{U+}(\theta)$ are respectively lower and upper CDs as per Definition \ref{def:upperlowerCD}.
\end{proposition}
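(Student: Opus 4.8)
The plan is to reduce the claim to a statement about stochastic ordering and then push that order through the two monotone maps $g_c$ and $G_c$. By part (a) of Proposition \ref{prop:FRTlowerCD}, each input $p_i^{L+}(\theta_0) = p^{L+}(\Dobs_i, \theta_0)$ stochastically dominates a Uniform[0,1] variable, and by part (b) each $p_i^{U+}(\theta_0)$ is stochastically dominated by a Uniform[0,1] variable; moreover, because the $M$ experiments are independent, the vectors $(p_1^{L+}(\theta_0), \ldots, p_M^{L+}(\theta_0))$ and $(p_1^{U+}(\theta_0), \ldots, p_M^{U+}(\theta_0))$ each have mutually independent coordinates. The target is to show $P(p_c^{L+}(\theta_0) \le \alpha) \le \alpha$ and $P(p_c^{U+}(\theta_0) \le \alpha) \ge \alpha$ for every $\alpha \in (0,1)$, which are exactly the lower- and upper-CD conditions of Definition \ref{def:upperlowerCD}.

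First I would construct, on a single probability space, a coupling that realizes the marginal stochastic orderings simultaneously while preserving independence across experiments. For the lower case, let $F_i$ denote the CDF of $p_i^{L+}(\theta_0)$; the lower-CD property gives $F_i(t) \le t$ on $[0,1]$, hence the quantile function satisfies $F_i^{-1}(u) \ge u$. Taking independent uniforms $V_1, \ldots, V_M$, the variables $F_i^{-1}(V_i)$ reproduce the product law of $(p_1^{L+}(\theta_0), \ldots, p_M^{L+}(\theta_0))$, while $U_i := V_i$ supplies iid uniforms with $F_i^{-1}(V_i) \ge U_i$ almost surely. The analogous construction for the upper case yields iid uniforms with $p_i^{U+}(\theta_0) \le U_i$ almost surely. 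Because the event $\{p_c^{L+}(\theta_0) \le \alpha\}$ is determined by the (product) law of the inputs, which the coupling preserves, it suffices to bound its probability on this coupled space; building the coupling coordinatewise from independent quantile transforms (rather than one shared uniform) is what keeps the coordinates independent and leaves the law of the combined statistic unchanged.

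Next I would transport the inequalities through the combination map. Since $g_c$ is non-decreasing in each coordinate, $g_c(p_1^{L+}(\theta_0), \ldots, p_M^{L+}(\theta_0)) \ge g_c(U_1, \ldots, U_M) =: S$ and $g_c(p_1^{U+}(\theta_0), \ldots, p_M^{U+}(\theta_0)) \le S$ almost surely; applying the non-decreasing map $G_c$ then gives $p_c^{L+}(\theta_0) \ge G_c(S)$ and $p_c^{U+}(\theta_0) \le G_c(S)$ almost surely. By construction $G_c$ is precisely the CDF of $S = g_c(U_1, \ldots, U_M)$, so the probability integral transform shows $G_c(S)$ is Uniform[0,1]. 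Hence $p_c^{L+}(\theta_0)$ stochastically dominates a uniform, giving $P(p_c^{L+}(\theta_0) \le \alpha) \le \alpha$, and $p_c^{U+}(\theta_0)$ is stochastically dominated by a uniform, giving $P(p_c^{U+}(\theta_0) \le \alpha) \ge \alpha$.

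The main obstacle is the probability integral transform step and the directionality it forces. If $S$ had atoms, $G_c(S)$ would only stochastically dominate a uniform: this still suffices for the lower-CD conclusion, but it breaks the upper-CD conclusion because it would deliver the wrong inequality direction. The clean two-sided statement therefore hinges on $G_c$ being continuous, i.e., on $S = g_c(U_1, \ldots, U_M)$ having no atoms, and I would state continuity of $G_c$ explicitly as the operating assumption. This is automatic for the classical combining functions (Fisher, Stouffer, and the like), where $g_c$ is a smooth, strictly monotone aggregator of continuous uniforms, so $S$ is absolutely continuous and $G_c(S)$ is exactly Uniform[0,1].
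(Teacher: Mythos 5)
Your proposal is correct, but it takes a genuinely different route from the paper's own proof. The paper argues by successive replacement: writing $p_c^{L+}(\theta_0)=G_c\bigl(g_c(\widetilde U_1,\ldots,\widetilde U_M)\bigr)$ with $\widetilde U_i=p^{L+}(\Dobs_i,\theta_0)$, it conditions on $\widetilde U_1,\ldots,\widetilde U_{M-1}$, uses the pointwise bound $P(\widetilde U_M\le \epsilon)\le \epsilon$ together with monotonicity of $t\mapsto h_M(t)=g_c(\widetilde U_1,\ldots,\widetilde U_{M-1},t)$ to swap $\widetilde U_M$ for a genuine uniform $U_M^\prime$ without decreasing the probability, and then repeats this coordinate by coordinate until all inputs are i.i.d.\ uniforms, at which point the probability equals $\alpha$ by the definition of $G_c$. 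You instead realize the stochastic dominance \emph{almost surely} via coordinatewise quantile-transform couplings ($p_i^{L+}(\theta_0)\ge U_i$, resp.\ $p_i^{U+}(\theta_0)\le U_i$), push the a.s.\ inequalities through the monotone maps $g_c$ and $G_c$, and finish with the probability integral transform. Both arguments use exactly the same ingredients (parts (a)/(b) of Proposition \ref{prop:FRTlowerCD}, independence across experiments, and coordinatewise monotonicity of $g_c$); yours replaces the iterated conditional expectations with one explicit coupling and treats the lower and upper cases symmetrically, which is arguably cleaner, while the paper's scheme stays at the level of distribution functions and never needs a coupling. A genuine merit of your write-up is making explicit the continuity of $G_c$ required for the upper-CD direction: the paper needs this too (its final equality $P[G_c(g_c(U_1^\prime,\ldots,U_M^\prime))\le\alpha]=\alpha$, its use of $G_c^{-1}$, and the unproved ``similar argument'' for $p_c^{U+}$ all presuppose it, since with atoms the inequality $P[G_c(S)\le\alpha]\le\alpha$ points the wrong way for an upper CD), but leaves it implicit; your observation that atoms of $S=g_c(U_1,\ldots,U_M)$ are harmless for the lower-CD conclusion yet fatal for the upper-CD one is exactly right, and continuity indeed holds for the Fisher, Stouffer and double-exponential combiners the paper actually uses.
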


As a consequence of Proposition \ref{prop:combined}, the combined $p$-value functions $p_c^{L+}(\theta)$ and $p_c^{L-}(\theta) = 1 - p_c^{U+}(\theta)$ can be used to test the sharp null $H_0^{\theta}$ for any specific value of $\theta$ against one-sided alternatives. Further, similar to (\ref{eq:p_twosided}), the combined two-sided $p$-value function 
\begin{equation}
p_c^{L}(\theta) = 2 \min \left\{ p_c^{L+}(\theta) , p_c^{L-}(\theta) \right\} \label{eq:comb3}
\end{equation}		
can be used to draw inference about $\theta$ by testing the sharp null hypothesis against a two-sided alternative. Finally, for $0<\alpha<1$, define $\theta_{\ell,c} = \sup_\theta\{\theta: p_c^{L+}(\Dobs,\theta) \le \alpha/2 \}$ and $\theta_{u,c} = \inf_\theta\{\theta: p_c^{L-}(\Dobs,\theta) \le \alpha/2\}$. Then by part (c) of Proposition \ref{prop:CI}, the interval $[\theta_{\ell,c}, \theta_{u,c})$ is a $100(1-\alpha)\%$ interval for $\theta$ obtained by combining the $M$ studies.

To implement the above steps, we need to choose specific forms of the function $g_c(\cdot)$. \cite{xie2011confidence} showed that the form $g_c(u_1, \ldots, u_M) = \sum_{i=1}^M w_i F_0^{-1}(u_i)$, where $F_0(\cdot)$ is a CDF and $w_1, \ldots, w_M$ are non-negative weights with at least one $w_i \ne 0$, generates most classical methods for combining $p$-values. Three examples are given below. 
\begin{enumerate}
\item With $w_i=1$ for all $i=1, \ldots, M$ and $F_0(x) = \Phi(x)$, the CDF of the standard normal distribution one obtaines Stouffer's method, in which
\begin{equation}
p_c^{L+}(\theta) = \Phi \left[ \frac{1}{\sqrt{M}} \sum_{i=1}^M \Phi^{-1} \left(p_i^{L+}(\theta) \right) \right], \ p_c^{U+}(\theta) = \Phi \left[ \frac{1}{\sqrt{M}} \sum_{i=1}^M \Phi^{-1} \left(p_i^{U+}(\theta) \right) \right]. \label{eq:stouffer_comb} 
\end{equation}	
\item Similarly, with $w_i=1$ for all $i=1, \ldots, M$ and $F_0(x) = e^x$ for $x \le 0$ generates Fisher's method, in which
\begin{equation}
p_c^{L+}(\theta) = P \left[ \chi^2_{2M} \ge -2 \sum_{i=1}^M \log\left(p_i^{L+}(\theta) \right) \right], \ p_c^{U+}(\theta) = P \left[ \chi^2_{2M} \ge -2 \sum_{i=1}^k \log\left(p_i^{U+}(\theta) \right) \right]. \label{eq:Fisher_comb}
\end{equation}
\item Finally, again taking $w_i=1$ for all $i=1, \ldots, M$ and $F_0(x) = \frac{1}{2}e^t\mathbbm{1}_{(x\leq 0)}+(1-\frac{1}{2}e^{-x})\mathbbm{1}_{(x>0)}$, i.e., the double exponential or Laplace CDF instead of the negative exponential CDF leads to the double exponential (DE) method for combining $p$-values.
\end{enumerate}

\cite{Singh2005} showed that the DE method for combining $p$-values is Bahadur efficient on both tails, whereas the Fisher method is Bahadur efficient only on the lower-sided tail. The proposed approach for combining FRT-based inference from independent randomized experiments will be demonstrated in Sections \ref{sec:simulations} and \ref{sec:example} using Fisher's and double exponential methods described above. It may be noted though, that the principles and algorithms described in this section opens up a multitude of possibilities for meta analysis.  

\section{Monte Carlo approximation of the $p$-value function} \label{sec:MonteCarlo}

The $p$ value functions defined in Section \ref{ss:one-sided} and Section \ref{ss:two-sided} can be computed for any given value of $\theta$ if all possible realizations $\Wrep$ of the assignment vector $\mathbf{W}$ can be obtained leading to generation of the exact randomization distribution of the test statistic $T(\Yimptheta, \Wrep)$. However, even for a moderate population size the total number of possible realizations of $\mathbf{W}$ is typically computationally prohibitive. The common solution to this problem is to draw, repeatedly and independently, randomized treatment assignment vectors $\Wrep_1, \ldots, \Wrep_M$, and obtain a Monte Carlo estimate of the $p$-value function based on the values of the test statistic computed from these $M$ draws. Consider specifically the estimation of $p^{L+}(\Dobs, \theta)$ defined in (\ref{eq:p_right_L}). The Monte-Carlo estimator of $p^{L+}(\Dobs, \theta)$ based on $K$ Monte Carlo samples is given by 
\begin{equation}
\widehat{p}_K^{L+}(\Dobs, \theta) = \frac{1}{K} \sum_{k=1}^M \mathbb{I} \left( T(\Yimptheta, \Wrep_k) \ge T(\Dobs) \right), \label{eq:p_est}
\end{equation}
where $\mathbb{I}(A)$ is the indicator function for event $A$. All the other $p$-value functions can be estimated similarly. In spite of this estimator being used since the times of Fisher, the effect of the Monte Carlo sample size $K$ on the accuracy of the estimator $\widehat{p}^{L+}(\Dobs, \theta)$ has not been researched. Therefore, it is not quite clear what should be a ``reasonable'' $K$ for a completely randomized design with $N = 20$ or a matched pair design with $N = 30$. Below, we provide a new result in the form of a concentration inequality that sheds some light on this question. 
\begin{theorem} \label{thm:upper_bound}
Let $K$ denote the number of Mone Carlo samples from the distribution of $\mathbf{W}$ and let $\widehat{p}^{L+}(\Dobs, \theta)$ be as defined in (\ref{eq:p_est}), where the underlying test statistic $T = T(\Yimptheta, \Wrep)$ is a non-decreasing and right continuous function of $\theta$ for fixed $\Wrep$. Fix $\epsilon > 0$. Then,
\begin{equation}
P \left(\sup_\theta \left|\widehat{p}_K^{L+}(\Dobs, \theta)- p^{L+}({\Dobs},\theta) \right| > \epsilon \right)\leq \min \left\{ 1, 4e^{-\frac{K\epsilon^2}{8}} \right\}. \label{eq:upper_bound}
\end{equation}
\end{theorem}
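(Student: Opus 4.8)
The plan is to reduce the uniform (in $\theta$) deviation of the Monte Carlo estimator to a classical Kolmogorov--Smirnov statistic and then invoke a distribution-free concentration bound; the monotonicity/right-continuity hypothesis on $T$ is exactly what makes this reduction possible. Fix $\Dobs$, so that $\Tobs = T(\Dobs)$ is a constant and, for each fixed draw $\Wrep$, the map $\theta \mapsto T(\Yimptheta, \Wrep)$ is a deterministic non-decreasing, right-continuous function of $\theta$. For the $k$th draw $\Wrep_k$ define the threshold
$$\theta_k = \inf\{\theta : T(\Yimptheta, \Wrep_k) \ge \Tobs\}.$$
Because $T(\Yimptheta, \Wrep_k)$ is non-decreasing and right-continuous in $\theta$, the level set $\{\theta : T(\Yimptheta, \Wrep_k) \ge \Tobs\}$ is the closed half-line $[\theta_k, \infty)$, with right continuity guaranteeing that the infimum is attained (and the conventions $\theta_k = \pm\infty$ in the degenerate cases where $T$ never crosses $\Tobs$ or always exceeds it). Hence each summand indicator equals $\mathbb{I}(\theta_k \le \theta)$, so that
$$\widehat p_K^{L+}(\Dobs, \theta) = \frac{1}{K}\sum_{k=1}^K \mathbb{I}(\theta_k \le \theta) =: \widehat F_K(\theta)$$
is precisely the empirical CDF of $\theta_1, \ldots, \theta_K$. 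Since the $\Wrep_k$ are i.i.d. draws from the assignment distribution, the thresholds $\theta_k = \theta(\mathbf W_k)$ are i.i.d., and their common CDF is $F(\theta) = P(\theta(\mathbf W) \le \theta) = P(T(\Yimptheta, \Wrep) \ge \Tobs) = p^{L+}(\Dobs, \theta)$. Therefore, conditionally on $\Dobs$,
$$\sup_\theta \left| \widehat p_K^{L+}(\Dobs, \theta) - p^{L+}(\Dobs, \theta) \right| = \sup_\theta |\widehat F_K(\theta) - F(\theta)|,$$
the two-sided Kolmogorov--Smirnov distance between the empirical and population CDFs of an i.i.d. sample of size $K$.

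With this identification, the second step is to apply the Dvoretzky--Kiefer--Wolfowitz inequality in Massart's sharp form, $P(\sup_\theta |\widehat F_K - F| > \epsilon) \le 2 e^{-2K\epsilon^2}$, which is distribution-free and hence valid for the (discrete, possibly defective) law of $\theta(\mathbf W)$. Since $2 e^{-2K\epsilon^2} \le 4 e^{-K\epsilon^2/8}$ for all $K$ and $\epsilon$, and since the left-hand side is a probability and thus bounded by $1$, the stated bound $\min\{1, 4 e^{-K\epsilon^2/8}\}$ follows at once. If one prefers a self-contained derivation over citing Massart, one can instead discretize the range of $F$ with a quantile grid on which $F$ increases by at most $\epsilon/2$ per cell; monotonicity of both $\widehat F_K$ and $F$ bounds the oscillation within each cell, reducing the supremum to a maximum over finitely many grid points controlled by Hoeffding's inequality, with a bounded-differences (McDiarmid) step absorbing the grid count into the constants $4$ and $1/8$.

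The main obstacle is conceptual rather than computational: the target is a supremum over the continuum of $\theta$, against which a naive pointwise Hoeffding-plus-union-bound argument fails. The entire difficulty is dissolved by the monotone reduction above, after which only an off-the-shelf CDF concentration result is required. The remaining care points are routine: checking that right-continuity makes the threshold representation \emph{exact} even at jump points and ties, so that $\widehat p_K^{L+}$ is literally an empirical CDF rather than an approximation; and verifying that atoms of $\theta(\mathbf W)$ at $\pm\infty$ leave both the Kolmogorov--Smirnov comparison and the DKW bound intact. Finally, the same argument applies verbatim to $\widehat p_K^{U+}$, and, with the monotonicity reversed via Remark \ref{remark:pminusmonotonicity}, to the left-tail estimators $\widehat p_K^{L-}$ and $\widehat p_K^{U-}$, so the concentration bound holds for all four Monte Carlo $p$-value functions.
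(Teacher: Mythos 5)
Your proposal is correct, and it takes a genuinely different route from the paper's own proof. The structural reduction you make---that monotonicity and right continuity of $\theta \mapsto T(\Yimptheta,\Wrep_k)$ turn each indicator into $\mathbb{I}(\theta_k \le \theta)$ for a threshold $\theta_k = \inf\{\theta : T(\Yimptheta,\Wrep_k) \ge \Tobs\}$---is exactly the observation underlying the paper's Lemma \ref{lemma1} (the $\theta_j^*$ there are your $\theta_k$), but from that point the arguments diverge. The paper builds a self-contained concentration proof: it symmetrizes with an independent copy $\Wrepprime_1,\ldots,\Wrepprime_K$ and Rademacher signs, uses the step-function structure to bound the supremum by the maximal partial sum $\max_{1\le j\le K}\bigl|\sum_{i=1}^{j}\epsilon_i\bigr|$, controls that maximum by a reflection-principle inequality (Lemma \ref{lemma2}), and finishes with Markov's and Hoeffding's inequalities and the choice $t=K\epsilon/4$, which is where the constants $4$ and $1/8$ originate. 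You instead recognize $\widehat{p}_K^{L+}(\Dobs,\theta)$ as literally the empirical CDF of the i.i.d.\ thresholds $\theta_1,\ldots,\theta_K$ and $p^{L+}(\Dobs,\theta)$ as their population CDF, so the supremum is a Kolmogorov--Smirnov statistic, and you invoke the Dvoretzky--Kiefer--Wolfowitz inequality in Massart's form, $2e^{-2K\epsilon^2}$, which is distribution-free and therefore valid for the discrete (and possibly defective, with atoms at $\pm\infty$) law of the threshold; since $2e^{-2K\epsilon^2}\le 4e^{-K\epsilon^2/8}$ and the left side is a probability, the stated bound follows. What each approach buys: yours is shorter and strictly sharper---the exponent $2K\epsilon^2$ versus $K\epsilon^2/8$ improves the rate by a factor of $16$, which would shrink every entry of Table \ref{tab:thresholdk} by roughly that factor---whereas the paper's argument is elementary and self-contained, in effect re-deriving a weaker DKW-type bound from scratch via symmetrization and the reflection principle. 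Your care points (right continuity making the level set a closed half-line so the empirical-CDF identification is exact, and atoms at $\pm\infty$ being harmless after a monotone compactification) are precisely the ones that need checking, and they do check out.
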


It is important to note that the bound (\ref{eq:upper_bound}) does not depend on $N$, making it particularly useful for cases when the total number of possible assignments is large. Table \ref{tab:thresholdk} presents threshold values of $K$ required to attain a probability bound of 0.01 for different values of estimation error $\epsilon$. In other words, when $K \ge K_{\text{thres}}$, the probability bound in the right hand side of \ref{eq:upper_bound} is smaller than or equal to 0.01.

\begin{table}[ht]
			\centering \small
			\caption{Threshold $K$ required to attain a probability bound of 0.01 for different values of $\epsilon$} \label{tab:thresholdk}
			\begin{tabular}{l|c}     
			\hline
				$\epsilon$ & $K_{\text{thres}}$ \\ \hline
				%0.5  &  192  \\
        %0.2  & 1199  \\
        0.1  & 4794  \\
        0.05 & 19173 \\
        0.02 & 119830\\
        0.01 & 479318 \\
      0.005  & 1917269 \\
      0.002  & 11982930 \\
      0.001  & 47931717 \\ \hline
			\end{tabular}
		\end{table}

The threshold table \ref{tab:thresholdk} provides us with a procedure to determine the Monte Carlo sample size $K$ that guarantees that the bound derived in Theorem \ref{thm:upper_bound} does not exceed 0.01 for a specified level of accuracy $\epsilon$, defined in terms of the supremum of the absolute difference between the $p$-value function and its Monte Carlo estimate. If the total number of all possible assignments, say $K^*$ is less than or equal to $K$, then one can enumerate all of them to obtain the $p$-value function. However, if $K^* > K_{\text{thres}}$, then a simple random sample of $K_{\text{thres}}$ assignments with replacement will be a reasonable strategy. 

\section{Simulations} \label{sec:simulations}

In this section we conduct simulations to demonstrate that our proposed guidelines and algorithms for estimating the $p$-value functions, inverting them to obtain confidence intervals and combining inferences across multiple independent experiments produce the desired results. We consider two types of randomized experiments: the completely randomized design (CRD) and the randomized block design (RBD). In the former, an even number $N$ of experimental units are equally split into treatment (denoted by 1) and control (denoted by 0) groups at random. In the latter, we consider $b$ blocks of experimental units with an equal even number ($k$) of units in each block (block size), so that $N = bk$ is the total number of units. The $k$ units within each block are equally split into treatment and control groups at random. Note that $b=1$ for an RBD is equivalent to a CRD.

We consider several scenarios shown in Table \ref{tab:coverage-width}, in each of which we consider combining results from two experiments with design parameters $(b_1, k_1)$ and $(b_2, k_2)$, where for $j = 1,2$, $b_j$ and $k_j$ denote the number of blocks and the block size respectively. The two individual experiments are either CRD or RBD.

For each individual experiment across all scenarios, the potential outcomes under control, $Y_i(0)$, $i=1, \ldots, N$ are generated from a lognormal distribution with parameters 0 and 1. The true additive effect is assumed to be zero, so that $Y_i(1) = Y_i(0)$ for $i=1, \ldots, N$. Potential outcomes once generated are kept fixed. The units are assigned to treatments in a manner described earlier, depending on whether the design is CRD or RBD.

Next, for each experiment, FRT is conducted using the difference of averages statistics between treatment and control groups as the test statistic. Denoting by $M$ all possible assignments, $K=\min\{M, 10000\}$ permutations are used to calculate or estimate the $p$-value functions $p^{L+}(\bf D^{\text{obs}},\theta)$ and $p^{U+}(\bf D^{\text{obs}},\theta)$. For each individual experiment, 95\% confidence intervals are obtained using the method described in part (c) of Proposition \ref{prop:CI} with $\alpha_1 = \alpha_2 = .025$. Finally the $p$-value functions from the two experiments in each scenario are combined using Fisher's method given by (\ref{eq:Fisher_comb}) and the double exponential (DE) method, and the 95\% confidence intervals are generated using the combined two-sided $p$-value function $p_c^{L}(\theta)$ defined in (\ref{eq:comb3}).

The simulation for each scenario is repeated 1500 times to calculate the coverage and the distribution of width of the 95\% intervals generated from the individual and combined experiment and results are shown in Table \ref{tab:coverage-width}.

\begin{table}[!ht]
	\centering
	\caption{Coverage and width of a 95\% CI by Fisher's and DE $p$-value combinations over repetitions} \label{tab:coverage-width} \vspace{0.1cm}
	\begin{threeparttable}[t]
		\centering \scriptsize
		\begin{tabular}{ccccccccccccccc}   
			\hline    
			\multicolumn{5}{c}{Scenario} & & \multicolumn{4}{c}{Coverage} & &\multicolumn{4}{c}{length of CI (mean$\pm$sd)} \\ 
			\cline{1-5}\cline{7-10}\cline{12-15}
			{Designs 1 \& 2}& $b_1$ & $k_1$ & $b_2$ & $k_2$ &  & Exp 1 & Exp 2 & Fisher & DE &  & Exp 1 & Exp 2 & Fisher  & DE \\ 
			\hline
			\multirow{10}{*}{CRD  \& CRD}& 1 &10 & 1 & 10 &\multirow{18}{*}{\small } & 0.952 & 0.952 &0.955 &0.958  &\multirow{18}{*}{\small } & 6.01$\pm$0.44 & 6.01$\pm$0.44 &4.20$\pm$0.31 & 4.28$\pm$0.36\\
			&1 &16 &1 & 16 & & 0.950  & 0.950  &0.951 &0.947 & & 3.70$\pm$0.16 & 3.70$\pm$0.16 &2.69$\pm$0.15 &  2.65$\pm$0.27\\ 
			&1 &24 &1 & 24 & & 0.943  & 0.943 &0.953 &0.945 &  & 2.53$\pm$0.07& 2.53$\pm$0.07  &1.88$\pm$0.09 &  1.81$\pm$0.20\\
			&1 &30 &1 & 30 & & 0.951 & 0.951 &0.955 &0.955 & & 2.25$\pm$0.05 & 2.25$\pm$0.05 &1.66$\pm$0.08 &1.63$\pm$0.17
			\\ \cline{2-5}\cline{7-10}\cline{12-15}
			&1 &10 &1 & 16 & & 0.952 & 0.950 & 0.952 &0.957 & & 6.01$\pm$0.44 & 3.70$\pm$0.16 & 3.32$\pm$0.39&3.23$\pm$0.30 \\ 
			&1 &10 &1 & 24 & & 0.952 & 0.943 & 0.948  &0.957 &  & 6.01$\pm$0.44 & 2.53$\pm$0.07 & \textbf{2.73$\pm$0.55}&2.45$\pm$0.21\\ 
			&1 &10 &1 & 30 & & 0.952 & 0.951 &0.942  &0.943 & & 6.01$\pm$0.44 & 2.25$\pm$0.05 & \textbf{2.57$\pm$0.60}&2.23$\pm$0.18\\ 
			&1 &16 &1 & 24 & & 0.950 & 0.943 &0.946  &0.950 & & 3.70$\pm$0.16 & 2.53$\pm$0.07 &2.24$\pm$0.24&2.14$\pm$0.23\\
			&1 &16 &1 & 30 & & 0.950 & 0.951 &0.955  &0.954 &  &3.70$\pm$0.16 & 2.25$\pm$0.05 &2.10$\pm$0.27&1.98$\pm$0.20\\
			&1 &24 &1 & 30 & & 0.943 & 0.951 &0.950  &0.949 & & 2.53$\pm$0.07& 2.25$\pm$0.05 &1.77$\pm$0.11&1.7$\pm$0.21\\\cline{1-5}\cline{7-15}
			\multirow{6}{*}{RBD \ \& RBD } &2 & 8 &2 & 8 & &0.951 & 0.951 &0.947 &0.948 & &3.78$\pm$0.19&3.78$\pm$0.19 &2.72$\pm$0.16&2.72$\pm$0.26\\
			&4 & 4 &4 & 4 & &0.951 &0.951 &0.962 &0.957 &  &3.23$\pm$0.16 &3.23$\pm$0.16 &2.32$\pm$0.13&2.30$\pm$0.26 \\ 
			&8 & 2 &8 & 2 & &0.953 &0.953 &0.961 &0.963 & &2.81$\pm$0.25 &2.81$\pm$0.25&1.96$\pm$0.15&2.01$\pm$0.16 \\ \cline{2-5}\cline{7-10}\cline{12-15}
			&2 & 8 &4 & 4 & &0.951 & 0.951 &0.951&0.955  & &3.78$\pm$0.19 &3.23$\pm$0.16 &2.51$\pm$0.18&2.49$\pm$0.26 \\
			&2 & 8 &8 & 2 & &0.951 & 0.953 &0.955 &0.955 & &3.78$\pm$0.19 &2.81$\pm$0.25 &2.33$\pm$0.23&2.33$\pm$0.19 \\
			&4 & 4 &8 & 2 & &0.951 &0.953 &0.955&0.955 &  &3.23$\pm$0.16 &2.81$\pm$0.25 &2.15$\pm$0.17&2.16$\pm$0.20 \\ \cline{1-5}\cline{7-15}
			\multirow{9}{*}{ CRD  \& RBD } 
			&1 & 10 &2 & 8 & &0.952 & 0.951 &0.944 &0.943 &  &6.01$\pm$0.44 & 3.78$\pm$0.19 &3.34$\pm$0.374 &3.28$\pm$0.29\\ 
			&1 & 10 &4 & 4 & &0.952 &0.951 &0.945 &0.953 & &6.01$\pm$0.44 & 3.23$\pm$0.16 &3.09$\pm$0.453&2.93$\pm$0.29\\ 
			&1 & 10 &8 & 2 & &0.952 &0.953 &0.945 &0.953 & &6.01$\pm$0.44 & 2.81$\pm$0.25  & \textbf{2.92$\pm$0.49}&2.77$\pm$0.23\\ 
			&1 & 16 &2 & 8 & &0.950 & 0.951 &0.942 &0.946 & &3.70$\pm$0.16 & 3.78$\pm$0.19 &2.70$\pm$0.16&2.69$\pm$0.26\\ 
			&1 & 16 &4 & 4 & &0.950 &0.951 &0.955 &0.953 & &3.70$\pm$0.16 & 3.23$\pm$0.16 &2.50$\pm$0.16&2.46$\pm$0.26 \\ 
			&1 & 16 &8 & 2 & &0.950 &0.953 &0.957 &0.957 & &3.70$\pm$0.16 & 2.81$\pm$0.25  &2.32$\pm$0.22&2.31$\pm$0.19\\ 
			&1 & 30 &2 & 8 & &0.951 & 0.951 &0.943 &0.945 & &2.25$\pm$0.05 & 3.78$\pm$0.19 &2.10$\pm$0.28&1.99$\pm$0.20\\ 
			&1 & 30 &4 & 4 & &0.951 &0.951 &0.951 &0.953 & &2.25$\pm$0.05 & 3.23$\pm$0.16 &1.96$\pm$0.19&1.88$\pm$0.20 \\ 
			&1 & 30 &8 & 2 & &0.951 &0.953 &0.950 &0.948  & &2.25$\pm$0.05 & 2.81$\pm$0.25  &1.80$\pm$0.15&1.79$\pm$0.16\\  
			\hline
		\end{tabular}
		\end{threeparttable}
  \end{table}

The simulations provide empirical evidence of the theoretical result that the proposed method for inverting FRT to obtain confidence intervals produces intervals with the desired coverage for individual as well as combined experiments. There are some instances where the coverage falls slightly below 95\% and these cases can be attributed to the error involved in estimation of the $p$-value functions with 10000 Monte Carlo samples. 

It is natural to expect that the length of the interval generated by combining the two individual experiments would be shorter than the lengths of interval obtained from \emph{each} individual experiment as such a fusion should increase the precision of inference. From Table \ref{tab:coverage-width} we see that this expectation is fulfilled in most scenarios except three (shown in bold), where the mean width of the interval obtained from the combined experiment is marginally greater than the mean width of the interval obtained from the larger experiment. These counter-intuitive results appear to hold in situations where the two experiments differ substantially in sample size and Fisher's method is used for combining experiments. A plausible explanation of this anomaly is the observation by \cite{Singh2005} that Fisher's method of combining $p$-values sharpens the inference for only tail of CDs, whereas the DE method sharpens the inference on both tails. These results suggest that the double-exponential method may be more robust than Fisher's method, particularly when it comes to combining experiments of different sizes.

\section{Real data example} \label{sec:example}

We consider an adopted example from \cite{Shadish:2008} involving estimating the effect of a vocabulary training course on vocabulary test scores to demonstrate the proposed methodology. The overall goal of the study, that includes a randomized experiment and an observational study, was to examine whether observational studies can be analyzed to yield valid estimates of a causal effect. Here, for demonstrating our approach, we only consider the randomized experiment and the problem of estimating the causal effect that taking a vocabulary training course has on vocabulary test scores. We split the 235 experimental units into two groups consisting of 100 and 135 units, and pretend that two independent completely randomized experiments were conducted, in each of which the treatment (vocabulary training) or control (math training) were assigned to the units using a completely randomized treatment assignment. In the first experiment, 44 out of 100 units were assigned to treatment, whereas in the second experiment, 72 out of 135 units were assigned to treatment. The outcome was the vocabulary test score after the experiment. The 95\% confidence intervals for the average treatment effect obtained from the two individual experiments and the combined experiment are shown in Table \ref{tab:shadish}. Three methods - Fisher's, Stouffer's and double exponential - are used to combine the results, and are found to produce almost identical results.

\begin{table}[ht] 
\caption{Combining information from two randomized experiments} \label{tab:shadish}
\centering \small
\begin{tabular}{cccccccl} \hline
Experiment & $N$ & $N_0$ & $N_1$ & $\ybarone$ & $\ybarzero$ & $\hat{\tau}$ & \hspace{.2 in} 95\% CI \\ \hline
1          & 100 & 44    & 56    &   16.55 &  8.36 & 8.19 &  (6.893, 9.476) \\
2          & 135 & 72    & 63    &   15.97 &  7.83 & 8.14 &  (6.939, 9.351) \\ \hline
           &     &       &       &         &       &      &  (7.190, 8.923) \ (Fisher's method) \\
Combined	 & 235 & 116   & 119   &   16.19 & 8.08  & 8.11 &	 (7.290, 9.040) \ (Stouffer's method) \\
	         &     &       &       &         &       &      &  (7.200, 9.125) \ (Double exponential method) \\
\hline
\end{tabular}
\end{table}

\section{Discussion and future directions} \label{sec:discussion}

In most scientific studies, assessing causal relationships among variables is considered more important than studying associations. The distinct difference between association and causality is now well understood: causality can only be determined by utilizing known or assumed knowledge about how the data were collected and consequently is more difficult to establish than association. However, technology has now created a perfect platform to design and analyze large studies conducted to assess causal effects of interventions and the FRT, with its unique ability to facilitate model-free assessment of causal effects is expected to have tremendous potential in modern day experiments. In this article, we attempt to address some apparently ambiguous aspects associated with the methodology for computation, inversion and principled aggregation of FRTs by developing a unified and comprehensive framework based on the versatile inferential tool confidence distribution (CD).

One of the main criticisms of FRT has been the sharp null hypothesis, that many, including Neyman had considered overly strong leading to the infamous Neyman-Fisher debate in 1935 \citep{Sabbaghi2014}. However, the possibility of applying FRT to test assess weaker null hypothesis has been explored and identified by a few researchers - see for example, \cite{DingDasgupta:2018}, Ding and Dasgupta (2018), \cite{Caughey:2017}, and \cite{WuDing:2019}. In particular, \cite{Caughey:2017} showed that the interval estimators obtained by inverting FRT can be interpreted more meaningfully under a bounded null hypotheses if EI test statistics are used. It will be interesting to extend our results to such weaker hypotheses and consequently have broader interpretations of the interval estimators.

We believe this article will open up a number of research possibilities. First, all our results pertain to finite samples. Exploring asymptotic properties of the interval estimators for individual and combined experiments using finite population asymptotics \citep{LiDing:2017} and borrowing relevant literature from CD literature will be a useful direction. Second, exploring ways to optimally combine experiments, as discussed in the last paragraph of Section \ref{sec:meta_analysis} will be an interesting line of investigation. Third, as seen from Table \ref{tab:thresholdk}, the Mone Carlo sample size required to achieve a small maximum absolute estimation error of say, .001 with a probability of .01 is approximately $4.79 \times 10^7$. Such a computation may still take a prohibitive amount of time, specifically (a) if its needs to be conducted repeatedly in a simulation setting and (b) if the sampling is of acceptance-rejection type as in rerandomization \cite{Morgan:2012} type of settings. A popular approach to tackle such computational issues is to use a split-and-conquer approach. The CD is well known for providing statistical guarantees to split-and-conquer approaches to inference problems \citep{Chen2014}, paving the way for exploring split-and-conquer approaches to increase computational efficiency in FRT. Fourth, extending the FRT-CD framework for analysis of data from observational studies and conducting sensitivity analysis is an interesting possibility. Finally, combining experiments and observational studies is an area of growing interest, and the FRT-CD may provide an excellent foundation for this area of research.

\section*{Acknowledgements}
This research was partially supported by National Science Foundation Grant Number DMS 1451817.

%%%%%%%%%%%	
	\bibliographystyle{apalike}
	\bibliography{CausalCDreferences,Unified}
	
\newpage	
	
\setcounter{section}{1}	
\section*{Supplementary materials}

\subsection{Demonstrating steps of FRT and its inversion} \label{sm:toy}

\begin{example}[A toy example] \label{example:toy1}
		
		The second and third columns of Table \ref{tab:toy} constitute the $10 \times 2$ potential outcomes matrix $\Ytrue$ in which the potential outcomes for the control are generated independently from a lognormal distribution with parameters 0.5 and 1, and $Y_i(1) = Y_i(0) + 1$ for $i=1, \ldots, 10$, making the true value of $\theta$ equal to 1.
		
		\begin{table}[ht]
			\centering
			\caption{Potential outcomes and observed data in a toy example} \label{tab:toy}
			\begin{tabular}{c|cc|c|c}       
				Unit($i$) & $Y_i(0)$ & $Y_i(1)$ & $W_i^{\textrm{obs}}$ & $Y_i^{\textrm{obs}}$ \\ \hline
				1  & 1.00 & 2.00 & 1 & 2.00 \\
        2  & 1.88 & 2.88 & 1 & 2.88 \\
        3  & 1.52 & 2.52 & 1 & 2.52 \\
        4  & 4.00 & 5.00 & 1 & 5.00 \\
        5  & 1.85 & 2.85 & 0 & 1.85 \\
        6  & 2.27 & 3.27 & 0 & 2.27 \\
        7  & 0.92 & 1.92 & 0 & 0.92 \\
        8  & 3.37 & 4.37 & 0 & 3.37 \\
        9  & 0.72 & 1.72 & 1 & 1.72 \\
       10  & 1.15 & 2.15 & 0 & 1.15 \\ \hline
			\end{tabular}
		\end{table}
		
		\begin{figure}[ht]
			\centering \caption{$p$-values for four different sharp null hypotheses} \label{fig:Trep_hist}
			\includegraphics[width=12cm]{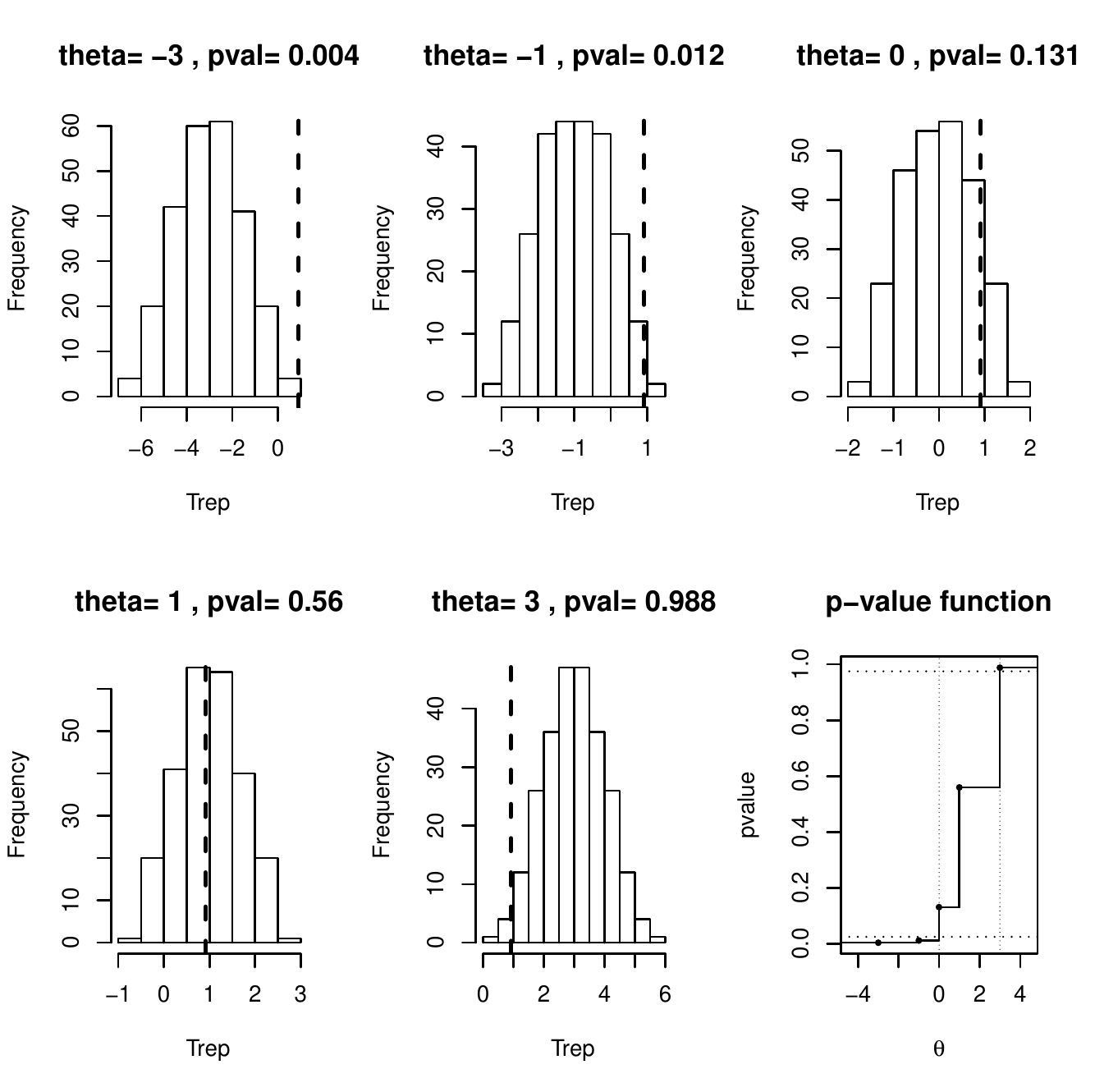}
		\end{figure}
		
Assume the assignment mechanism to be a balanced completely randomized design that assigns five units to control and the rest to treatment. The observed assignment $\Wobs = (1,1,1,1,0,0,0,0,1,0)$ shown in the fourth column of Table \ref{tab:toy} generates the observed data $\Yobs$ in the fifth column. Suppose we are interested in testing the sharp null hypothesis $H_0^{\theta}$ for five values of $\theta$: -3, -1, 0, 1 and 3. Note that the case with $\theta=0$ is the sharp null hypothesis of no treatment effect, i.e., $H_0: \theta = 0$. Consider the test statistic $T = \ybarone - \ybarzero$, the observed value of which is 0.912. The distributions of $\Trep = T(\mathbf{Y}^{\textrm{imp}}, \Wrep)$ over the 252 possible draws $\Wrep$ (all possible permutations of the $\Wobs$) for the five hypotheses are shown in Figure \ref{fig:Trep_hist}, yielding observed $p$-values of 0.004, 0.012, 0.131, 0.56 and 0.988 respectively. Thus one will reject $H_0^{\theta}$ for $\theta = -3, -1$ at 5\% level of significance and not reject it for $\theta = 0,1, 3$. Figure \ref{fig:Trep_hist} also suggests that (i) the $p$-values would have remained unchanged if one had used a centered test statistic $T = \ybarone - \ybarzero - \theta$ instead, and (ii) for the given data, it is possible to obtain a ``$p$-value function'' of $\theta$ shown in the lower right panel of the figure, by testing a set of sharp null hypotheses. This step function can be inverted to generate a 95\% interval $[0,3]$ (as shown by dotted lines) for the true additive effect $\theta$. As we shall see in Section \ref{sec:connection}, most of the subsequent developments will be based on this function and its variants.
		
	\end{example}

\subsection{Illustration of a CD function and CV} \label{sm:CD}

	\begin{figure}[ht]
		\centering
		\caption{\small (a) A CD function $H_n(\theta)$ and (b) its corresponding {\it confidence curve} (CV) $CV_n(\theta)= 2 \min\{H_n(\theta), 1-H_n(\theta)\}$ for parameter $\theta$.} \label{fig:CD-CV}
		\centering{\includegraphics[width=15 cm]{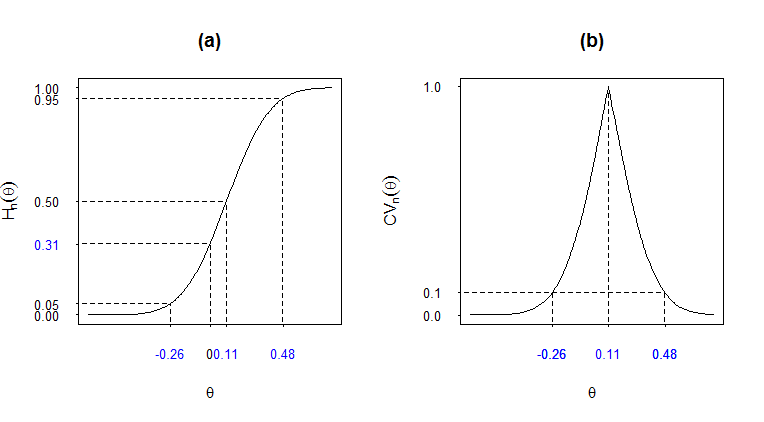}}
		\singlespace
	\end{figure} 

Figure \ref{fig:CD-CV}~(a) illustrates a CD function $H_n(\theta) = \Phi(\sqrt{n}(\theta-\bar{y}))$, based on a sample from an $N(\theta,1)$  distribution with size $n=20$ and sample mean $\bar y = 0.11$ and demonstrates how it can be used to draw inference about the normal mean $\theta$. The dashed lines illustrate how to obtain a point estimate of $0.11$, a 90\% confidence interval of $(-0.26, 0.48)$, and a $p$-value $0.31$ for testing the hypothesis $H_0:\theta \le 0$ versus $H_1:\theta>0$. Figure \ref{fig:CD-CV}~(b) shows the corresponding {\it confidence curve} (CV) $CV_n(\theta)= 2 \min\{H_n(\theta), 1-H_n(\theta)\}$ for parameter $\theta$ and demonstrates how similar inferential information can be obtained from it.

\subsection{Proof of results} \label{sm:proofs}

\subsubsection*{Proof of Proposition \ref{prop:FRTlowerCD}} \label{pf:Prop1}
\begin{proof}
First, by (\ref{eq:p_right_L}),
			\begin{eqnarray} 
				p^{L+}(\Dobs,\theta_0) &=& P \left( T(\Yimpthetazero, \Wrep) \ge T(\Dobs)  \right) \nonumber \\
				&=& P \left( T(\Yimpthetazero, \Wrep) \ge T(\Ytrue, \Wobs)  \right) \ \mbox{by} \ (\ref{eq:Tobs}) \nonumber \\ 
				&=& P \left( T(\Ytrue, \Wrep) \ge T(\Ytrue, \Wobs)  \right) \ \mbox{since} \ \Yimpthetazero=\Ytrue. \label{eq:prop1proof1} 
			\end{eqnarray}
If $T_{(1)} < T_{(2)} < \cdots < T_{(m)}$ denote the $m$ unique ordered values of $T$ for $\theta = \theta_0$ and $\gamma_i=P \left(T(\Ytrue, \bm W)=T_{(i)}\right)>0$ for $i=1,2,\cdots, m$, so that $\gamma_1+\gamma_2+\cdots+\gamma_m=1$.  Then by (\ref{eq:prop1proof1}), $p^{L+}(\Dobs,\theta_0)$ can take values $\gamma_m, \gamma_m+\gamma_{m-1}, \cdots, \gamma_m+\gamma_{m-1}+\cdots+\gamma_{1}$ with probabilities $\gamma_m,\gamma_{m-1},\cdots \gamma_1$ respectively.
			
For any $\alpha \in (0,1)$, there exists a unique $0 \le j\le m-1$ such that $\alpha \in (\gamma_m+\cdots +\gamma_{m-j+1}, \gamma_m+\cdots+\gamma_{m-j})$, where $\gamma_{m+1} = - \gamma_{m}$. Consequently, 
			\begin{equation}
			P \left( p^{L+}(\Dobs,\theta_0) \le \alpha \right) = \sum_{k=0}^j P \left( p^{+}(\Dobs,\theta_0)=\gamma_m+\cdots+\gamma_{m-k+1} \right) =\gamma_m+\cdots+\gamma_{m-j+1} \le \alpha. \label{eq: domination}
			\end{equation}
%By proof of (a) above, $p^{+}(\Dobs,\theta_0)\ge P(\bm W=\Wobs)>0$, then $P \left( p^{+}(\Dobs,\theta_0) \le \alpha \right)=0$  when $\alpha=0$. These 
This establishes that $p^{L+}(\Dobs,\theta_0)$ stochastically dominates the Uniform[0,1] random variable, proving part (a) of the Proposition for $p^{L+}(\Dobs,\theta_0)$. Further, by (\ref{eq: domination}), it follows that for a fixed $\alpha \in (0,1)$,
$$\alpha - P \left( p^{L+}(\Dobs,\theta_0) \le \alpha \right) = \alpha - (\gamma_m + \ldots + \gamma_{m-j+1}) \le \gamma_{m-j}$$ for some $0 \le j \le m-1$. Consequently, for any arbitrary $\alpha \in (0,1)$, 
$$\alpha - P \left( p^{L+}(\Dobs,\theta_0) \le \alpha \right) \le \max \{\gamma_1, \ldots, \gamma_m\}.$$
This proves part (c) for $p^{L+}(\Dobs,\theta_0)$. 

Similar arguments lead to proof of (a) for $p^{L-}(\Dobs,\theta_0)$, (b) for $p^{U+}(\Dobs,\theta_0)$ and $p^{U-}(\Dobs,\theta_0)$ and part (c) for $p^{L-}(\Dobs,\theta_0)$, $p^{U+}(\Dobs,\theta_0)$ and $p^{U-}(\Dobs,\theta_0)$
		\end{proof}

\subsubsection*{Proof of Theorem 1} \label{pf:Thm1}

\begin{proof}
\noindent Part (a): Let $\Yimpthetai(1)$ and $\Yimpthetai(0)$ denote the imputed potential outcomes for unit $i$ under $H_0^{\theta}$. Then 
\begin{eqnarray}
			\Yimpthetai(1)=\left\{ \begin{array}{cc} \Yobs_i(1), & \Wobs_i=1 \\
			\Yobs_i(0)+\theta, & \mbox{otherwise} \end{array} \right. \ \ \text{and} \ \ \Yimpthetai(0)=\left\{ \begin{array}{cc} \Yobs_i(1)-\theta, & \Wobs_i=1 \\
			\Yobs_i(0), & \mbox{otherwise} \end{array} \right. \label{eq: imputed potential outcome}
\end{eqnarray}
Thus, for any given assignment $\bm W$, the $N \times 1$ vectors of imputed potential outcomes $\Yimptheta(1) $ and $\Yimptheta(0)$ are respectively non-decreasing and non-increasing in $\theta$ in the sense of Definition 5. Since $T$ is EI, by Definition 6, $T(\Yimptheta, \Wrep) = T(\Yimptheta(1), \Yimptheta(0), \Wrep)$ is non-decreasing in $\Yimptheta(1)$ and non-increasing in $\Yimptheta(0)$, and consequently non-decreasing in $\theta$. Then by  (\ref{eq:p_right_L}), $p^{L+}(\Dobs,\theta_0)$ is non decreasing in $\theta$. 
			
			\medskip

\noindent Part (b):	First, note that, for any $\theta$, $T(\Yimptheta, \Wobs)=T(\Ytrue,\Wobs)=T(\Dobs)$. Moreover, as $\theta\rightarrow \infty$, for any fixed assignment $\bm W \neq \Wobs$, by the stated condition it follows that $T(\Yimptheta,\bm W) \rightarrow \infty> T(\Dobs)$. Thus, as $\theta\rightarrow \infty$, $T(\Yimptheta,\bm W)\ge T(\Dobs)$ for all $\bm W$ and therefore by  (\ref{eq:p_right_L}),  $p^{L+}(\Dobs, \theta)\rightarrow 1$. Similarly, by the stated condition, it follows that for any fixed assignment $\bm W \neq \Wobs$ , as $\theta\rightarrow -\infty$, $T(\Yimptheta,\bm W) \rightarrow -\infty < T(\Dobs)$. However, since $T(\Yimptheta, \Wobs)=T(\Dobs)$, by (\ref{eq:p_right_L}) we have that, $p^{L+}(\Dobs, \theta) \rightarrow P(\bm W=\Wobs)$ as $\theta\rightarrow -\infty$. This proves (i).
			
			\medskip
			
\noindent Finally, by definition, $p^{+}(\Dobs,\theta)$ is a step function with jump discontinuities, and right continuous for fixed $\Dobs$ since $T$ is right continuous. This proves (ii)
			
			\medskip

\end{proof}

\subsubsection*{Proof of Proposition \ref{prop:CI}} \label{ss:prop2proof}

\begin{proof}
Here we prove part (c), and the proofs of parts (a) and (b) are similar. By definition of $\theta_{ell}(\alpha_1)$ and $\theta_u(\alpha_2)$, it follows that the coverage probability
\begin{eqnarray*}
&& P_{\theta} \left[ \theta_{\ell}(\alpha_1) \le \theta < \theta_u(\alpha_2) \right] \\
&\ge& P_{\theta} \left[p^{L+}(\Dobs, \theta) > \alpha_1 \ \text{and} \ p^{L-}(\Dobs, \theta) > \alpha_2\right] \\
&\ge& P_{\theta} \left[p^{L+}(\Dobs, \theta) > \alpha_1\right] + P_{\theta} \left[p^{L-}(\Dobs, \theta) > \alpha_2\right] - 1 \\
&=& 1 - P_{\theta} \left[p^{L+}(\Dobs, \theta) \le \alpha_1\right] + 1 - P_{\theta} \left[p^{L-}(\Dobs, \theta) \le \alpha_2\right] - 1 \\
&\ge& 1 - \alpha_1 + 1 - \alpha_2 -1 \\
&=& 1 - \alpha.
\end{eqnarray*}

\end{proof}

\subsubsection*{Proof of Proposition \ref{prop:combined}} \label{ss:prop3proof}

\begin{proof} 
Let $\theta_0$ be the true value of $\theta$. Denote $p^{L+}(\Dobs_1, \theta_0),\ldots , p^{L+}(\Dobs_M, \theta_0)$ by $\widetilde{U}_1,\cdots, \widetilde{U}_M$, which are independent random variables that stochastically dominate the Uniform[0,1] random variable by part (a) of Proposition \ref{prop:FRTlowerCD}.

From (\ref{eq:comb2}), it follows that for any $0<\alpha<1$,  
\begin{align*}
   p_c^{L+}(\theta_0) &= P\left[G_c\left(g_c\left(p^{L+}(\Dobs_1, \theta_0),\ldots , p^{L+}(\Dobs_M, \theta_0)\right)\right)\leq \alpha\right]
    \\&=P\left[G_c\left(g_c\left(\widetilde{U}_1,\ldots, \widetilde{U}_M\right)\right)\leq \alpha\right]
    \\&= E\left[\mathbbm{1}_{\left(G_c\left(g_c(\widetilde{U}_1,\ldots, \widetilde{U}_M)\right)\leq \alpha\right)}\right]
    \\&= E\left[E\left[\mathbbm{1}_{\left(G_c\left(g_c(\widetilde{U}_1,\ldots, \widetilde{U}_M)\right)\leq \alpha\right)}\Big|\widetilde{U}_1,\ldots, \widetilde{U}_{M-1} \right]\right]
    \\&= E\left[P\left(\widetilde{U}_M\leq h_M^{-1}\left(G_c^{-1}(\alpha)\right)\big|\widetilde{U}_1,\ldots, \widetilde{U}_{M-1} \right)\right], \quad \text{where}\ h_M(t)=g_c(\widetilde{U}_1,\ldots, \widetilde{U}_{M-1},t).
    \\&\leq E\left[h_M^{-1}\left(G_c^{-1}(\alpha)\right)\right],\quad \text{since}\ P(\widetilde{U}_{M}\leq \epsilon)\leq \epsilon.
    \\ &= E\left[P\left({U}_M^\prime\leq h_M^{-1}\left(G_c^{-1}(\alpha)\right)\big|\widetilde{U}_1,\ldots, \widetilde{U}_{M-1} \right)\right], \ \text{where}\ {U}_M^\prime\sim \text{Unif[0,1], independent of} \ \widetilde{U}_1,\ldots, \widetilde{U}_{M}. \\
    &= E\left[E\left[\mathbbm{1}_{\left(G_c\left(g_c(\widetilde{U}_1,\ldots, \widetilde{U}_{M-1}, U_M^\prime)\right)\leq \alpha\right)}\Big|\widetilde{U}_1,\ldots, \widetilde{U}_{M-1} \right]\right]
    \\&= P\left[G_c\left(g_c\left(\widetilde{U}_1,\ldots, \widetilde{U}_{M-1},U_M^\prime\right)\right)\leq \alpha\right]
    \\&\leq P\left[G_c\left(g_c\left({U}_1^\prime,\ldots, {U}_{M-1}^\prime, {U}_M^\prime\right)\right)\leq \alpha\right], \ \text{repeating the previous step successively for} \ \widetilde{U}_{M-1}, \ldots, \widetilde{U}_1. 
    \\&=\alpha.
\end{align*}
Thus, $p_c^{L+}(\theta)$ is a valid lower CD. A similar argument can be applied to show that $p_c^{U+}(\theta)$ is also a valid upper CD.
\end{proof}

\subsubsection*{Proof of Theorem \ref{thm:upper_bound}} \label{ss:thm3proof}

\noindent We first introduce/recall the following notations: Let ${\Wrep_1,\ldots, \Wrep_K}$ be independent assignment vectors from the assignment mechanism and ${\Wrepprime_1,\ldots, \Wrepprime_K}$ be an independent copy. For $\theta\in \mathbb{R}$, let $\Yimptheta$ be the imputed potential outcome matrix under sharp null hypothesis $H_0^\theta: Y_i(1)-Y_i(0)=\theta,  \ \forall i=1,\cdots K$.  

Define 
\begin{equation}
\theta(\Wrep_j)=\mathbbm{1}\left\{T(\Yimptheta,\Wrep_j)\geq \Tobs \right\}\quad \text{for} \ j=1,2,\cdots K. \label{eq:indicator}
\end{equation}

To simplify notations, we denote the Monte-Carlo estimator $\widehat{p}_K^{L+}(\Dobs, \theta)$ of the $p$-value function $p^{L+}(\Dobs, \theta)$ based on assignment vectors $\Wrep_1,\ldots, \Wrep_K$ by $\delta_K$, which by (\ref{eq:indicator}) and (\ref{eq:p_est}) can be written as
\begin{equation}
\delta_K = \frac{1}{K}\sum_{j=1}^K \theta(\Wrep_j). \label{eq:deltaK}
\end{equation}

Similarly, for the independent copy ${\Wrepprime_1,\ldots, \Wrepprime_K}$, we can write
\begin{equation}
\delta_K^{\prime} = \frac{1}{K}\sum_{j=1}^K \theta(\Wrepprime_j). \label{eq:deltaKprime}
\end{equation}

We now state and prove three lemmas. Lemma \ref{lemma1} and Lemma \ref{lemma3} are useful in proving the theorem, and Lemma \ref{lemma2} helps establish Lemma \ref{lemma3}.

\begin{lemma} \label{lemma1}
Let $\epsilon_1,\cdots\epsilon_K$ be iid symmetric Bernoulli random variables, which are also independent of $\Wrep_1,\ldots, \Wrep_K$ and ${\Wrepprime_1,\ldots, \Wrepprime_K}$. Then,
$$\sup_\theta \left|\sum_{j=1}^K \epsilon_j \theta({\Wrep_j)} \right| \le \max_{1\leq j\leq K} \left|\sum_{i=1}^j \epsilon_i \right|,$$
where $\theta({\Wrep_j)}$ is defined in (\ref{eq:indicator}) and the underlying test statistic $T = T(\Yimptheta, \Wrep)$ is a non-decreasing and right continuous function of $\theta$ for fixed $\Wrep$.
\end{lemma}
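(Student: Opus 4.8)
The plan is to reduce the supremum over the continuum of values of $\theta$ to a maximum over only $K+1$ points, and then to exploit the exchangeability of the symmetric Bernoulli (i.e.\ $\pm1$-valued) sequence $\epsilon_1,\dots,\epsilon_K$. First I would record the structural fact that drives the statement: for each fixed assignment $\Wrep_j$, the map $\theta\mapsto T(\Yimptheta,\Wrep_j)$ is non-decreasing and right continuous, so the indicator $\theta(\Wrep_j)=\mathbbm{1}\{T(\Yimptheta,\Wrep_j)\ge \Tobs\}$ is a $\{0,1\}$-valued, non-decreasing, right-continuous step function of $\theta$. Hence there is a threshold $\tau_j\in[-\infty,+\infty]$ with $\theta(\Wrep_j)=0$ for $\theta<\tau_j$ and $\theta(\Wrep_j)=1$ for $\theta\ge\tau_j$.

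Next I would use this to trace the trajectory of $S(\theta):=\sum_{j=1}^K \epsilon_j\,\theta(\Wrep_j)$. Since each summand switches from $0$ to $1$ exactly once, as $\theta$ increases from $-\infty$ to $+\infty$ the function $S(\theta)$ is piecewise constant and changes value only at the thresholds $\tau_1,\dots,\tau_K$. Ordering them as $\tau_{(1)}\le\cdots\le\tau_{(K)}$ and letting $\sigma$ be the induced permutation (the index with the $i$-th smallest threshold is $\sigma(i)$), the values attained by $S(\theta)$ are exactly the partial sums $\sum_{i=1}^\ell \epsilon_{\sigma(i)}$ for $\ell=0,1,\dots,K$, with coincident thresholds merely skipping intermediate partial sums, which cannot enlarge the supremum. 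Therefore
\[
\sup_\theta\Bigl|\textstyle\sum_{j=1}^K \epsilon_j\,\theta(\Wrep_j)\Bigr|
\;\le\;\max_{0\le \ell\le K}\Bigl|\textstyle\sum_{i=1}^\ell \epsilon_{\sigma(i)}\Bigr|,
\]
with equality when the thresholds are distinct.

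The main obstacle, and the step requiring care, is that the right-hand side involves partial sums taken in the threshold order $\sigma$, whereas the asserted bound uses the natural order $1,\dots,K$; these two quantities are not equal pointwise (a simple three-term example shows the threshold-ordered max can strictly exceed the natural-ordered one). I would resolve this by conditioning on $(\Wrep_1,\dots,\Wrep_K)$, which fixes $\sigma$ but is independent of $(\epsilon_1,\dots,\epsilon_K)$. Because the $\epsilon_i$ are i.i.d.\ and symmetric, for every fixed $\sigma$ one has $(\epsilon_{\sigma(1)},\dots,\epsilon_{\sigma(K)})\stackrel{d}{=}(\epsilon_1,\dots,\epsilon_K)$, so $\max_{0\le\ell\le K}|\sum_{i=1}^\ell \epsilon_{\sigma(i)}|$ has the same conditional distribution as $\max_{0\le\ell\le K}|\sum_{i=1}^\ell \epsilon_i|$; dropping the $\ell=0$ term (which equals $0$ and is never the maximum) yields $\max_{1\le j\le K}|\sum_{i=1}^j\epsilon_i|$. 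Consequently the displayed supremum is, conditionally and hence unconditionally, stochastically dominated by (indeed distributed as) $\max_{1\le j\le K}|\sum_{i=1}^j\epsilon_i|$. I would flag that the inequality is to be read in this distributional sense, since it is invoked inside an expectation in the subsequent symmetrization argument when bounding $P(\sup_\theta|\delta_K-\delta_K^{\prime}|>\epsilon)$, after which a maximal inequality for the symmetric random walk $\{\sum_{i=1}^j\epsilon_i\}$ controls the right-hand side.
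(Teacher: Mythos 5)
Your proposal is correct, and its skeleton is the same as the paper's: both proofs use the monotone, right-continuous threshold structure $\theta(\Wrep_j)=\mathbbm{1}\{\theta\ge\theta_j^*\}$, with $\theta_j^*=\inf\{\theta: T(\Yimptheta,\Wrep_j)\ge\Tobs\}$, to reduce the supremum over $\theta$ of the step function $\sum_j\epsilon_j\theta(\Wrep_j)$ to a maximum over its finitely many attained values, which are partial sums of the $\epsilon_j$'s taken in threshold order. Where you part ways with the paper is precisely the step you flag as the main obstacle, and you are right to flag it. The paper disposes of the ordering by asserting ``without loss of generality, $\theta_1^*\le\cdots\le\theta_K^*$'' and then writing the partial sums in the natural index order; this relabeling is not innocuous, because the right-hand side $\max_{1\le j\le K}\bigl|\sum_{i=1}^j\epsilon_i\bigr|$ is not invariant under permuting the $\epsilon_i$'s. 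As your three-term example shows, the pointwise (almost sure) inequality asserted by the lemma can actually fail: if the threshold order is $\sigma=(1,3,2)$ and $(\epsilon_1,\epsilon_2,\epsilon_3)=(1,-1,1)$, the left side equals $2$ while the right side equals $1$. Your repair — condition on $(\Wrep_1,\ldots,\Wrep_K)$, which fixes $\sigma$ and is independent of the $\epsilon_i$'s, and use exchangeability of the iid symmetric Bernoulli sequence to identify the conditional law of $\max_{0\le\ell\le K}\bigl|\sum_{i=1}^\ell\epsilon_{\sigma(i)}\bigr|$ with that of $\max_{1\le j\le K}\bigl|\sum_{i=1}^j\epsilon_i\bigr|$ — is exactly what is needed: it establishes the lemma as a domination in distribution rather than pointwise, and that distributional form is all that the proof of Theorem \ref{thm:upper_bound} uses, since the bound is invoked inside an expectation (one bounds $E\bigl[e^{2t\sup_\theta|\frac1K\sum_j\epsilon_j\theta(\Wrep_j)|}\bigr]$ by first conditioning on the $\Wrep_j$'s). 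So your argument is not merely correct; it identifies and fixes a genuine gap in the paper's own proof, at the cost of making explicit the conditioning and the distributional reading of the inequality.
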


\begin{proof}
For fixed $\Wrep_1,\ldots, \Wrep_K$, define
$$ \theta_j^* = \inf \{\theta: T(\Yimptheta,\Wrep_j)\geq \Tobs \}, \quad j=1, \ldots, K. $$
Because $T( \Yimptheta, \Wrep_j)$ is non-decreasing in $\theta$ and is right continuous, it follows that 
$$
\theta(\Wrep_j) = \left\{
        \begin{array}{ll}
            0, & \quad \theta < \theta_j^*, \\
            1, & \quad \theta \geq \theta_j^*,
        \end{array}
    \right.
$$

Without loss of generality, assume that $-\infty\leq\theta_1^*\leq \ldots \leq \theta_K^*\leq \infty$. Then  
$$
\sum_{j=1}^K \epsilon_j \theta(\Wrep_j) = \left\{
        \begin{array}{ll}
            0, & \quad \theta < \theta_1^*, \\
            \epsilon_1, & \quad \theta_1^*\leq \theta < \theta_2^*,\\
            \epsilon_1+\epsilon_2, & \quad \theta_2^*\leq \theta < \theta_3^*,\\
            \vdots\\
            \epsilon_1+\ldots+\epsilon_K, & \quad \theta \geq \theta_K^*.
        \end{array}
    \right.
$$
Consequently,
$$\sup_\theta \left|\sum_{j=1}^K \epsilon_j \theta({\Wrep_j)} \right|\leq \max_{1\leq j\leq K} \left|\sum_{i=1}^j \epsilon_i \right|.$$
\end{proof}

\begin{lemma} \label{lemma2}
Let $\epsilon_1,\cdots\epsilon_K$ be iid symmetric Bernoulli random variables. For any $c\in \mathbb{R}$, 
$$P\left(\max_{1\leq j\leq K} \left|\sum_{i=1}^j \epsilon_i \right|\geq c\right)\leq 2P\left( \left|\sum_{i=1}^K \epsilon_i \right|\geq c\right).$$
\end{lemma}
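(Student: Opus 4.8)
The plan is to prove this as a classical maximal (reflection-type) inequality via a first-passage decomposition. First I would dispose of the trivial case $c \le 0$: writing $S_j = \sum_{i=1}^j \epsilon_i$, the event $\{|S_K| \ge c\}$ holds surely, so the right-hand side is at least $1$ and the bound is immediate. Assume henceforth $c > 0$. Next I introduce the partial sums $S_0 = 0$, $S_j = \sum_{i=1}^j \epsilon_i$ and the first-passage time $\tau = \min\{1 \le j \le K : |S_j| \ge c\}$, with $\tau = \infty$ if the walk never reaches level $c$ in absolute value. The central observation is the identity of events $\{\max_{1 \le j \le K} |S_j| \ge c\} = \{\tau \le K\} = \bigsqcup_{j=1}^K \{\tau = j\}$, so it suffices to control $P(\tau \le K)$ and relate each piece to $\{|S_K| \ge c\}$.

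The key step is a symmetry argument carried out on each event $\{\tau = j\}$. Since $\{\tau = j\}$ and $S_j$ are measurable with respect to $(\epsilon_1, \ldots, \epsilon_j)$, while the increment $S_K - S_j = \sum_{i=j+1}^K \epsilon_i$ is independent of that history and, being a sum of symmetric Bernoulli variables, is symmetric about $0$, we have $P(S_K - S_j \ge 0) \ge \tfrac{1}{2}$ and $P(S_K - S_j \le 0) \ge \tfrac{1}{2}$. On $\{\tau = j\}$ I split according to the sign of $S_j$: if $S_j \ge c$, then $S_K - S_j \ge 0$ forces $S_K \ge c$; if $S_j \le -c$, then $S_K - S_j \le 0$ forces $S_K \le -c$; in either case $|S_K| \ge c$. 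Invoking independence on each piece gives $P(\{\tau = j\} \cap \{|S_K| \ge c\}) \ge \tfrac{1}{2} P(\tau = j,\, S_j \ge c) + \tfrac{1}{2} P(\tau = j,\, S_j \le -c) = \tfrac{1}{2} P(\tau = j)$, where the final equality uses that $|S_j| \ge c$ holds on $\{\tau = j\}$ by definition of $\tau$.

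Finally, since $\{|S_K| \ge c\} \subseteq \{\tau \le K\}$, the events $\{\tau = j\}$ for $j = 1, \ldots, K$ partition $\{|S_K| \ge c\}$, so summing the displayed bound over $j$ yields $P(|S_K| \ge c) = \sum_{j=1}^K P(\{\tau = j\} \cap \{|S_K| \ge c\}) \ge \tfrac{1}{2} \sum_{j=1}^K P(\tau = j) = \tfrac{1}{2} P(\tau \le K)$, which rearranges to the claim. I expect the main obstacle to be the conditioning bookkeeping in the middle step: one must verify carefully that $\{\tau = j\}$ depends only on $(\epsilon_1, \ldots, \epsilon_j)$, so that the independence and symmetry of the future increment $S_K - S_j$ can be invoked cleanly at the random time $\tau$, and that the sign split is handled without double counting. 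The independent-increment (strong Markov-type) structure is precisely what legitimizes passing the conditioning through $\tau$.
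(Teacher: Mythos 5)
Your proof is correct, and it takes a genuinely different route from the paper's. The paper splits the event $\{\max_{1\le j\le K}|S_j|\ge c\}$ according to whether $|S_K|\ge c$ or $|S_K|<c$, and bounds the probability of the second piece by a combinatorial reflection principle: any $\pm 1$ path that reaches level $c$ in absolute value but ends with $|S_K|<c$ is reflected about its last crossing of the line $y=c$ or $y=-c$, producing a unique path with $|S_K|\ge c$; since every trajectory $(\epsilon_1,\dots,\epsilon_K)$ has probability $2^{-K}$, this injection gives $P(\max_j|S_j|\ge c,\,|S_K|<c)\le P(|S_K|\ge c)$, and adding back the first piece yields the factor $2$. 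Your argument is instead the classical L\'evy maximal-inequality proof: decompose by the first-passage time $\tau$, and use that the future increment $S_K-S_\tau$ is independent of the past and symmetric, so with probability at least $\tfrac12$ it does not pull the walk back across the level it just reached. The two approaches buy different things. The paper's reflection is elementary and visual, but it is tied to the discrete, equiprobable-path structure of symmetric Bernoulli sums (hence its restriction to integer $c$) and requires care that the reflection map is well defined and injective. Your argument never counts paths: it uses only independence and symmetry of the increments, so it extends verbatim to partial sums of arbitrary independent symmetric random variables, handles all real $c$ uniformly, and localizes the delicate point exactly where you identified it --- the measurability of $\{\tau=j\}$ with respect to $(\epsilon_1,\dots,\epsilon_j)$, which legitimizes invoking independence at the stopping time. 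For the purposes of Lemma 3 and Theorem 2 in the paper, both proofs deliver the identical bound, so nothing downstream changes.
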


\begin{proof}
\noindent Since $\max_{1\leq j\leq K} \left|\sum_{i=1}^j \epsilon_i \right|$ and $\left|\sum_{i=1}^K \epsilon_i \right|$ can only take values $0, 1, \ldots, K$, it suffices to consider $c \in \{0, 1, \ldots, K\}$.
\begin{align*}
P\left(\max_{1\leq j\leq K} \left|\sum_{i=1}^j \epsilon_i \right|\geq c\right)& =P\left(\max_{1\leq j\leq K} \left|\sum_{i=1}^j \epsilon_i \right|\geq c, \left|\sum_{i=1}^K \epsilon_i \right|\geq c\right) + P\left(\max_{1\leq j\leq K} \left|\sum_{i=1}^j \epsilon_i \right|\geq c, \left|\sum_{i=1}^K \epsilon_i \right|< c\right)
\\&\leq P\left( \left|\sum_{i=1}^K \epsilon_i \right|\geq c\right)+ P\left(\max_{1\leq j\leq K} \left|\sum_{i=1}^j \epsilon_i \right|\geq c, \left|\sum_{i=1}^K \epsilon_i \right|< c\right)
\end{align*}
It suffices to prove that $$P\left(\max_{1\leq j\leq K} \left|\sum_{i=1}^j \epsilon_i \right|\geq c, \left|\sum_{i=1}^K \epsilon_i \right|< c\right)\leq P\left( \left|\sum_{i=1}^K \epsilon_i \right|\geq c\right).$$ For any possible choice of $(\epsilon_1,\epsilon_2,\cdots,\epsilon_K)$ satisfying $\max_{1\leq j\leq K} \left|\sum_{i=1}^j \epsilon_i \right|\geq c$ and $\left|\sum_{i=1}^K \epsilon_i \right|< c$, by reflection principle (reflect around the last intersection point with line $y=c$ or $y=-c$ as shown in Figure \ref{fig:reflection}), there is a unique $(\epsilon_1^\prime,\epsilon_2^\prime,\cdots,\epsilon_K^\prime)$ satisfying $\left|\sum_{i=1}^K \epsilon_i^\prime \right|\geq c$. Besides, the probability of every choice of $(\epsilon_1,\epsilon_2,\cdots,\epsilon_K)$ is $1/2^K$. This yields the desired inequality.
 \begin{figure}[ht]
	\centering 
	\caption{Reflection principle} \label{fig:reflection}
	\includegraphics[width=8cm]{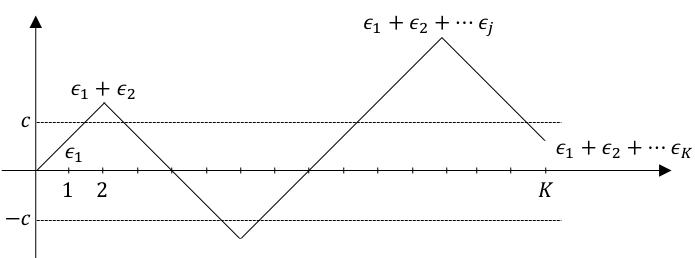}\\\vspace{1cm}
	\includegraphics[width=8cm]{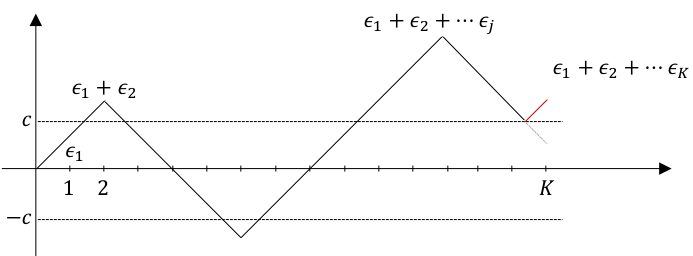}
    \end{figure}
\end{proof}
		
\begin{lemma} \label{lemma3}
Let $\epsilon_1,\cdots\epsilon_K$ be defined as in Lemma \ref{lemma2}. Then, for any $t>0$
$$E\left[ e^{2t\max_{1\leq j\leq K} \left|\frac{1}{K}\sum_{i=1}^j \epsilon_i \right|}\right]\leq 2E\left[ e^{2t \left|\frac{1}{K}\sum_{i=1}^K \epsilon_i \right|}\right].$$
\end{lemma}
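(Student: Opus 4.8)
The plan is to reduce this exponential-moment inequality to the tail bound already established in Lemma \ref{lemma2}, using the layer-cake (tail-integral) representation of the expectation of a monotone function. Write $M_K = \max_{1\le j\le K}\bigl|\tfrac1K\sum_{i=1}^j \epsilon_i\bigr|$ and $S_K = \bigl|\tfrac1K\sum_{i=1}^K \epsilon_i\bigr|$; both are nonnegative and bounded (supported in $[0,1]$). The key elementary fact I would invoke is that for any nonnegative random variable $X$ and any absolutely continuous nondecreasing $g$ with $g(0)$ finite,
\[
E\bigl[g(X)\bigr] = g(0) + \int_0^\infty g'(x)\, P(X > x)\, dx,
\]
which follows from $g(X)=g(0)+\int_0^X g'(x)\,dx$ and Tonelli's theorem (the integrand $g'\ge 0$). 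I would apply this to $g(x)=e^{2tx}$, so that $g(0)=1$ and $g'(x)=2t\,e^{2tx}$.

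Applying the identity to $X=M_K$ and to $X=S_K$ yields
\[
E\!\left[e^{2tM_K}\right] = 1 + \int_0^\infty 2t\,e^{2tx}\,P(M_K > x)\,dx,
\qquad
E\!\left[e^{2tS_K}\right] = 1 + \int_0^\infty 2t\,e^{2tx}\,P(S_K > x)\,dx.
\]
Next I would rescale Lemma \ref{lemma2}: since $\{M_K \ge x\} = \{\max_{1\le j\le K}|\sum_{i=1}^j\epsilon_i| \ge Kx\}$ and $\{S_K \ge x\} = \{|\sum_{i=1}^K\epsilon_i| \ge Kx\}$, taking $c = Kx$ in Lemma \ref{lemma2} gives $P(M_K \ge x) \le 2\,P(S_K \ge x)$ for every $x$. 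Substituting this into the integral for $E[e^{2tM_K}]$ and integrating back gives
\[
E\!\left[e^{2tM_K}\right] \le 1 + \int_0^\infty 2t\,e^{2tx}\cdot 2\,P(S_K \ge x)\,dx = 1 + 2\bigl(E[e^{2tS_K}]-1\bigr) \le 2\,E\!\left[e^{2tS_K}\right],
\]
which is the desired bound (in fact with the slightly sharper constant term $2E[e^{2tS_K}]-1$).

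The only subtlety, and the point I would be most careful about, is the mismatch between the strict tails $P(\cdot > x)$ appearing in the layer-cake formula and the non-strict tails $P(\cdot \ge x)$ delivered by Lemma \ref{lemma2}. Since $M_K$ and $S_K$ take values only in the finite set $\{0,\tfrac1K,\dots,1\}$, the functions $x\mapsto P(\,\cdot > x)$ and $x\mapsto P(\,\cdot \ge x)$ differ only at the finitely many atoms, a set of Lebesgue measure zero; hence the two integral representations agree, and replacing $P(M_K>x)$ by $P(M_K\ge x)$ (to apply Lemma \ref{lemma2}) and then $P(S_K\ge x)$ by $P(S_K>x)$ (to recover $E[e^{2tS_K}]$) leaves every integral unchanged. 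This measure-zero bookkeeping is routine rather than a genuine obstacle; the real content of the lemma is entirely carried by the reflection-principle tail estimate of Lemma \ref{lemma2}, and the layer-cake identity simply transports that pointwise comparison of tails into the comparison of exponential moments.
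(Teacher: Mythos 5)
Your proof is correct and takes essentially the same route as the paper's: the paper also establishes this lemma by combining the tail-integral (layer-cake) identity $EX=\int_0^\infty P(X\geq t)\,dt$ for nonnegative $X$ with the reflection-principle tail bound of Lemma \ref{lemma2}. Your explicit rescaling $c=Kx$ and the measure-zero bookkeeping for strict versus non-strict tails simply fill in details the paper leaves implicit.
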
		

\begin{proof}
The proof is a straightforward application of Lemma \ref{lemma2} and the well known result $EX=\int_0^\infty P(X\geq t) dt$ for any nonnegative random variable $X$.
\end{proof}

We now return to the proof of the main theorem. For $t > 0$, we have,
\begin{align*}
& P\left(\sup_\theta |\delta_K-p^{L+}(\Dobs,\theta)|> \epsilon\right)\\
&\leq e^{-t\epsilon}E\left(e^{t\sup_\theta |\delta_K-p^{L+}(\Dobs,\theta)|}\right)\quad \text{by Markov inequality}
\\&=e^{-t\epsilon}E\left(\sup_\theta e^{t |\delta_K-p^{L+}(\Dobs,\theta)|}\right)
\\&=e^{-t\epsilon}E\left(\sup_\theta e^{t |\delta_K-E\delta_K^\prime|}\right)
\\&=e^{-t\epsilon}E\left(\sup_\theta e^{t \big|E\{\delta_K-\delta_K^\prime |\delta_K\}\big|}\right)
\\&\leq e^{-t\epsilon}E\left(\sup_\theta E\left( e^{t |\delta_K-\delta_K^\prime |}\Big|\delta_K\right)\right) \ \text{by Jensen's inequality}
\\&\leq e^{-t\epsilon}E\left( E\left(\sup_\theta e^{t |\delta_K-\delta_K^\prime |}\Big|\delta_K\right)\right) 
\\&= e^{-t\epsilon}E\left(\sup_\theta e^{t |\delta_K-\delta_K^\prime|}\right)
\\&=e^{-t\epsilon}E\left[\sup_\theta e^{t \Big|\frac{1}{K}\sum_{j=1}^K \left(\theta(\Wrep_j)-\theta(\Wrepprime_j)\right)\Big|}\right] \quad \text{by} \ (\ref{eq:deltaK}) \ \text{and} \ (\ref{eq:deltaKprime})
\\&=e^{-t\epsilon}E\left[\sup_\theta e^{t \Big|\frac{1}{K}\sum_{j=1}^K \epsilon_j\left(\theta(\Wrep_j)-\theta(\Wrepprime_j)\right)\Big|}\right] \quad \text{where} \ \epsilon_j \overset{iid}{\sim} \ \text{symmetric Bernoulli}
\\ &\leq e^{-t\epsilon}E\left[\sup_\theta e^{\frac{1}{2} 2t |\frac{1}{K}\sum_{j=1}^K \epsilon_j \theta(\Wrep_j)|+\frac{1}{2} 2t |\frac{1}{K}\sum_{j=1}^K \epsilon_j \theta(\Wrepprime_j)|}\right]\quad \text{by the triangle inequality}
\\&\leq e^{-t\epsilon}E\left[\sup_\theta e^{2t |\frac{1}{K}\sum_{j=1}^K \epsilon_j \theta(\Wrep_j)|}\right] \quad \text{by Jensen's inequality and properties of the supremum}
\\& \leq e^{-t\epsilon}E\left[ e^{2t\sup_\theta |\frac{1}{K}\sum_{j=1}^K \epsilon_j \theta(\Wrep_j)|}\right]
\\&\leq e^{-t\epsilon}E\left[ e^{2t\max_{1\leq j\leq K} |\frac{1}{K}\sum_{i=1}^j \epsilon_i |}\right]\quad  \text{by Lemma \ref{lemma1}} 
\\&\leq 2e^{-t\epsilon}E\left[ e^{2t |\frac{1}{K}\sum_{i=1}^K \epsilon_i |}\right] \quad \text{by Lemma \ref{lemma3}}
\\&\leq 2e^{-t\epsilon}E\left[ e^{\frac{2t}{K}\sum_{i=1}^K \epsilon_i }+e^{\frac{-2t}{K}\sum_{i=1}^K \epsilon_i }\right]
\\&=4e^{-t\epsilon}E\left[ e^{\frac{2t}{K}\sum_{i=1}^K \epsilon_i }\right]
\\&\leq 4e^{-t\epsilon}e^{\frac{2t^2}{K}} \quad \text{by Hoeffding’s inequality}.
\end{align*}
The desired result is obtained by taking $t=\frac{K\epsilon}{4}$ in the above inequality.

\end{document}